\tikzstyle{myStyle}=[shape = circle, minimum size = 2pt, inner sep =1.5pt, outer sep = 0pt, draw, fill=white]
\newcommand{\lsay}[1]{{\color{red} Lora says: #1}}
\newcommand{\hsay}[1]{{\color{JungleGreen} Heather says: #1}}
\newcommand{\garnersay}[1]{{\color{ForestGreen} Garner says: #1}}
\newcommand{\nsay}[1]{{\color{purple} Nathan says: #1}}
\newcommand{\rsay}[1]{{\color{cyan} Reem says: #1}}
\newcommand{\elizabethsay}[1]{{\color{green!50!blue} Elizabeth says: #1}}
\newcommand{\isay}[1]{{\color{orange} Inne says: #1}}
\newcommand{\gracesay}[1]{{\color{violet} Grace says: #1}}
\newcommand{\elenasay}[1]{{\color{Maroon} Elena says: #1}}
\newcommand{\asay}[1]{{\color{olive} Alex says: #1}}
\newcommand{\msay}[1]{{\color{blue} Michael says: #1}}
\newcommand{\algprobm}[1]{\textsc{#1}\xspace}
\theoremstyle{plain}
\newtheorem{theorem}{Theorem}[section]
\newtheorem{proposition}[theorem]{Proposition}
\newtheorem{corollary}[theorem]{Corollary}
\newtheorem{lemma}[theorem]{Lemma}
\newtheorem{claim}[theorem]{Claim}
\newtheorem{observation}[theorem]{Observation}
\newtheorem*{thm:MainCounting}{Theorem \ref{thm:MainCounting}}
\newtheorem*{thm:MainMedian}{Theorem \ref{thm:MainMedian}}
\theoremstyle{definition}
\newtheorem{definition}[theorem]{Definition}
\newtheorem{remark}[theorem]{Remark}
\newcommand{\Thm}[1]{Theorem~\ref{#1}\xspace}
\newcommand{\Rmk}[1]{Remark~\ref{#1}\xspace}
\newcommand{\pb}[1]{\left(#1\right)}
\newcommand{\st}[1]{\left\{#1\right\}}
\title{Complexity and Enumeration in Models of Genome Rearrangement}
\thanks{We wish to thank the American Mathematical Society for organizing the Mathematics Research Community workshop where this work began. This material is based upon work supported by the National Science Foundation under Grant Number DMS   1641020 on the MRC wiki. ML was partially supported by J. A. Grochow's NSF award CISE-2047756 and the University of Colorado Boulder, Department of Computer Science Summer Research Fellowship. ML wishes to thank J.A. Grochow for helpful discussions regarding counting complexity. A preliminary version of this work appeared in COCOON 2023 \cite{bailey2023complexity}.} 
\author[Bailey et. al.]{Lora Bailey}
\address[Bailey]{Department of Mathematics, Grand Valley State University, Allendale, MI}
\email{baileylo@gvsu.edu}
\author[]{Heather Smith Blake}
\address[Blake]{Department of Mathematics and Computer Science, Davidson College, Davidson, NC}
\email{hsblake@davidson.edu}
\author[]{Garner Cochran}
\address[Cochran]{Department of Mathematics and Computer Science, Berry College, Mount Berry, GA}
\email{gcochran@berry.edu}
\author[]{Nathan Fox}
\address[Fox]{Department of Quantitative Sciences, Canisius University, Buffalo, NY}
\email{fox42@canisius.edu}
\author[]{Michael Levet$^*$}
\address[Levet]{Department of Computer Science, College of Charleston, Charleston, SC}
\email{levetm@cofc.edu}
\author[]{Reem Mahmoud}
\address[Mahmoud]{Division of Science, NYU Abu Dhabi, Abu Dhabi, UAE  (\textup{The majority of the work was completed while at} Department of Computer Science, Virginia Commonwealth University, Richmond, VA)}
\email{rm7230@nyu.edu}
\author[]{Elizabeth Bailey Matson}
\address[Matson]{The Division of Mathematics and Computer Science, Alfred University, Alfred, NY}
\email{matson@alfred.edu}
\author[]{Inne Singgih}
\address[Singgih]{Department of Mathematical Sciences, University of Cincinnati, Cincinnati, OH}
\email{inne.singgih@uc.edu}
\author[]{Grace Stadnyk}
\address[Stadnyk]{Department of Mathematics, Furman University, Greenville, SC}
\email{grace.stadnyk@furman.edu}
\author[]{Xinyi Wang}
\address[Wang]{Department of Computational Mathematics, Science, and Engineering, Michigan State University, East Lansing, MI}
\email{wangx249@msu.edu}
\author[]{Alexander Wiedemann}
\address[Wiedemann]{Department of Mathematics and Computational Data Science, Hamline University, Saint Paul, MN  (\textup{The majority of the work was completed while at} Department of Mathematics, Randolph--Macon College, Ashland, VA)}
\email{awiedemann01@hamline.edu}
\begin{document}
\maketitle
\let\thefootnote\relax\footnotetext{$^*$Corresponding author}

\begin{abstract}
In this paper, we examine the computational complexity of enumeration in certain genome rearrangement models. We first show that the \algprobm{Pairwise Rearrangement} problem in the Single Cut-and-Join model (Bergeron, Medvedev, \& Stoye, \textit{J. Comput. Biol.} 2010) is $\#\textsf{P}$-complete under polynomial-time Turing reductions. Next, we show that in the Single Cut or Join model (Feijao \& Meidanis, \textit{IEEE ACM Trans. Comp. Biol. Bioinf.} 2011), the problem of enumerating all medians ($\algprobm{\#Median}$) is logspace-computable ($\textsf{FL}$), improving upon the previous polynomial-time ($\textsf{FP}$) bound of Mikl\'os \& Smith (RECOMB 2015).
\end{abstract}

\section{Introduction}\label{sec:Introduction}

With the natural occurrence of mutations in genomes and the wide range of effects this can incite, scientists seek to understand the evolutionary relationship between species. Several discrete mathematical models have been proposed (which we discuss later) to model these mutations based on biological observations. Genome rearrangement models consider situations in which large scale mutations alter the order of the genes within the genome. Sturtevant~\cite{Sturtevant1917, Sturtevant1931} observed the biological phenomenon of genome rearrangement in the study of strains of \textit{Drosophila} (fruit flies), only a few years after he produced the first genetic map~\cite{Sturtevant1913}. Palmer~\&~Herbon~\cite{PalmerHerbon} observed similar phenomenon in plants. McClintock~\cite{mcclintock1951chromosome} also found experimental evidence of genes rearranging themselves, or ``transposing'' themselves, within chromosomes.  Subsequent to his work on \textit{Drosophila}, Sturtevant together with Novitski~\cite{SturtevantNovitski} introduced one of the first genome rearrangement problems, seeking a minimum length sequence of operations (in particular, so-called \textit{reversals}~\cite{HannenhalliPevzner}) that would transform one genome into another.

In this paper, we consider genome rearrangement models where each genome consists of directed edges, representing genes. Each directed edge receives a unique label, and each vertex has degree 1 or 2 (where we take the sum of both the in-degree and out-degree). There are no isolated vertices. Notably, each component in the associated undirected graph is either a path or a cycle.  Biologically, each component in the graph represents a chromosome. Paths correspond to linear chromosomes, such as in eukaryotes, and cycles correspond to circular chromosomes, which play a role in tumor growth~\cite{RaphaelPevzner}; see Figure~\ref{adjacency-fig}.

A genome model specifies the number of connected components (chromosomes), the types of components (linear, circular, or a mix of the two), and the permissible operations. The models we will consider allow for removing (cutting) and creating (joining) instances where two edges (genes) are incident, with certain models allowing for multiple cuts or joins to occur as part of a single operation. The \textit{reversal}~model~\cite{SturtevantNovitski}, for example, takes as input a genome consisting precisely of a single linear chromosome. In a now classical paper, Hannenhalli~\&~Pevzner~\cite{HannenhalliPevzner} exhibited a polynomial-time algorithm for computing the distance between two genomes in the reversal model. Later, those same authors generalized the reversal model to allow for multiple chromosomes and additional operations~\cite{HannenhalliPevzner2}. There are also several models that permit genomes which consist of both linear and circular chromosomes, including, for instance, the \textit{Single Cut or Join}~(\textit{SCoJ})~\cite{FeijaoMeidanis}, \textit{Single Cut-and-Join}~(\textit{SCaJ})~\cite{BergeronMedvedevStoye}, and \textit{Double Cut-and-Join}~(\textit{DCJ})~\cite{YancopoulosAttieFriedberg} models (see Section~\ref{sec:GenomeRearrangement} for a precise formulation). When choosing an appropriate model, it is important to balance biological relevance with computational tractability. This motivates the study of the computational complexity for genome rearrangement problems.

There are several natural genome rearrangement problems. We have already mentioned the \mbox{\algprobm{Distance}} problem, which asks for the minimum number of operations needed to transform one genome into another. Other natural problems include \algprobm{Pairwise Rearrangement}~(see Definition~\ref{def:DistancePairwiseRearrangement}), \mbox{\algprobm{Median} (Definition~\ref{def:Medians})}, \mbox{\algprobm{Median Scenarios}} (Definition~\ref{def:Medians}), \algprobm{Tree Labeling}, and \algprobm{Tree Scenarios}.  The focus of this paper is on the \mbox{\algprobm{Pairwise Rearrangement}} and \mbox{\algprobm{Median}} problems. We refer to \cite{MiklosSmithSamplingCounting} for more on \algprobm{Tree Labeling} and \algprobm{Tree Scenarios}. We summarize the known complexity-theoretic results in Table~\ref{fig:Fig1}.

\begin{table}[h]
    \begin{tabular}{l|l|l|l|l|}
    & \text{Reversal} & \text{SCoJ} & \text{SCaJ} & \text{DCJ} \\ \hline
    \algprobm{Distance} & T : in $\textsf{FP}$ \cite{HannenhalliPevzner} & T : in $\textsf{FP}$ \cite{BergeronMedvedevStoye} & T : in $\textsf{FP}$ \cite{FeijaoMeidanis} & T : in $\textsf{FP}$ \cite{BergeronMixtackiStoye} \\ \hline
    
    \makecell[l]{\algprobm{Pairwise}\\ \algprobm{Rearrangement}} & \makecell[l]{C: \textsf{\#P}-complete \\ C: In \textsf{FPRAS}} & T: in $\textsf{FP}$ \cite{MiklosKissTannier} & \makecell[l]{T: $\textsf{\#P}$-complete$^{*}$ \\ U: in/not in \textsf{FPRAS}} & \makecell[l]{C: \textsf{\#P}-complete \\ T: in \textsf{FPRAS} \cite{MiklosTannierDCJFPRAS}} \\ \hline

    \algprobm{Median} & \makecell[l]{T: not in $\textsf{FP}^{\dagger}$ \\T: not in $\textsf{FPRAS}^{\dagger}$} & \makecell[l]{T: in $\textsf{FP}$ \cite{MiklosSmithSamplingCounting} \\ T: in \textsf{FL}$^{*}$}& \makecell[l]{U: $\textsf{FP}$/$\textsf{NP}$-hard \\ U: in/not in $\textsf{FPRAS}$} & \makecell[l]{T: not in $\textsf{FP}^{\ddagger}$ \\T: not in $\textsf{FPRAS}^{\ddagger}$} \\ \hline

    \makecell[l]{\algprobm{Median}\\ \algprobm{Scenarios}} & \makecell[l]{T: not in $\textsf{FP}^{\dagger}$ \\T: not in $\textsf{FPRAS}^{\dagger}$} & \makecell[l]{T: $\textsf{\#P}$-complete \cite{MiklosSmith2019} \\ U: in/not in \textsf{FPRAS}}& \makecell[l]{U: $\textsf{FP}$/$\textsf{\#P}$-complete \\ U: in/not in $\textsf{FPRAS}$} & \makecell[l]{T: not in $\textsf{FP}^{\ddagger}$ \\T: not in $\textsf{FPRAS}^{\ddagger}$} \\ \hline

    \makecell[l]{\algprobm{Tree}\\ \algprobm{Labeling}} & \makecell[l]{T: not in $\textsf{FP}^{\dagger}$ \\T: not in $\textsf{FPRAS}^{\dagger}$} & \makecell[l]{U: $\textsf{FP}$/$\textsf{\#P}$-complete  \\ U: in/not in \textsf{FPRAS}}& \makecell[l]{U: $\textsf{FP}$/$\textsf{\#P}$-complete \\ U: in/not in $\textsf{FPRAS}$} & \makecell[l]{T: not in $\textsf{FP}^{\ddagger}$ \\T: not in $\textsf{FPRAS}^{\ddagger}$} \\ \hline

    \makecell[l]{\algprobm{Tree}\\ \algprobm{Scenarios}} & \makecell[l]{T: not in $\textsf{FP}^{\dagger}$ \\T: not in $\textsf{FPRAS}^{\dagger}$} & \makecell[l]{T: $\textsf{\#P}$-complete \cite{MiklosSmith2019} \\ T: not in \textsf{FPRAS} \cite{MiklosKissTannier}}& \makecell[l]{U: $\textsf{FP}$/$\textsf{\#P}$-complete \\ U: in/not in $\textsf{FPRAS}$} & \makecell[l]{T: not in $\textsf{FP}^{\ddagger}$ \\T: not in $\textsf{FPRAS}^{\ddagger}$} \\ \hline
    \end{tabular}
    \caption{Theorems are denoted by T, conjectures by C, and problems with unknown complexity by U (those problems without enough evidence for a conjecture). The entry ``not in $\textsf{FP}$'' is under the assumption that $\textsf{P} \neq \textsf{NP}$. The entry ``not in $\textsf{FPRAS}$'' is under the assumption that $\textsf{RP} \neq \textsf{NP}$. Those marked with $\dagger$ and $\ddagger$ follow from the fact that the corresponding decision problem is $\textsf{NP}$-hard, \cite{Caprara1999} and  \cite{TannierZhengSankokff} respectively. Those marked with $^*$ indicate results in this paper.}
    \label{fig:Fig1}
\end{table}

\noindent \\ \textbf{Main Results.} Our first main result concerns the computational complexity of the \algprobm{Pairwise Rearrangement} problem in the Single Cut-and-Join model:

\begin{theorem} \label{thm:MainCounting} \label{thm:SCJCount} 
In the Single Cut-and-Join model, the
\algprobm{Pairwise Rearrangement} problem is \mbox{$\#\textsf{P}$-complete} under polynomial-time Turing reductions.
\end{theorem}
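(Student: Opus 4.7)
The plan is to prove $\#\textsf{P}$-completeness by separately establishing membership in $\#\textsf{P}$ and hardness under polynomial-time Turing reductions.

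For \textbf{membership}, since the SCaJ distance $d$ between two genomes is computable in polynomial time~\cite{BergeronMedvedevStoye}, every minimum sorting scenario has polynomial length. Each SCaJ operation (a single cut, a single join, or a combined cut-and-join) admits a polynomial-size description: one specifies the adjacency to cut and/or the telomeres to join. A nondeterministic polynomial-time Turing machine then guesses a sequence of $d$ operations, verifies that each is a valid SCaJ move on the current genome, and checks that the final genome equals the target. The number of accepting computation paths equals the number of minimum sorting scenarios, placing \algprobm{Pairwise Rearrangement} in $\#\textsf{P}$.

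For \textbf{hardness}, the plan is to reduce from a classical $\#\textsf{P}$-complete problem---natural candidates include counting perfect matchings in bipartite graphs or a counting problem on permutation factorizations---by constructing, from an instance $I$, a pair of genomes $(G_1, G_2)$ whose adjacency graph has a prescribed component structure acting as a gadget encoding the combinatorial constraints of $I$. The first step is to derive a closed-form expression for the number of minimum SCaJ sorting scenarios, organized as a product over the components of the adjacency graph weighted by a multinomial coefficient recording the interleavings of operations across components. This factorization reduces the hardness question to understanding the per-component factors, which I expect to arise from counting minimum sorting scenarios on isolated cycles and paths of various parities; these local counts should be the source of the encoded combinatorial structure.

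The \textbf{main obstacle} will be designing gadgets that precisely encode the source problem's combinatorial structure while respecting the restricted SCaJ operation set. Unlike DCJ, where each operation consists of two cuts and two joins with substantial flexibility, SCaJ's three operation types impose delicate order-of-operation constraints governing how each adjacency-graph component can be sorted, which makes a direct parsimonious reduction awkward. The freedom permitted by polynomial-time Turing reductions is likely essential here: rather than engineering a single instance whose count equals the target exactly, one can query the oracle on a family of parameterized instances (for example, by appending controllable ``dummy'' components to inflate the multinomial interleaving factor) and recover the desired count via polynomial interpolation, thereby disentangling the gadget contribution from parasitic factors introduced by cross-component interleaving.
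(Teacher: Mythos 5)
Your membership argument is fine and matches the standard one. The hardness plan, however, contains a structural misconception that undermines the proposed strategy. You describe the count of minimum scenarios as ``a product over the components of the adjacency graph weighted by a multinomial coefficient recording the interleavings.'' If that were the whole story, \algprobm{Pairwise Rearrangement} would be in $\textsf{FP}$: each per-component count has a simple closed form ($1$, $2^{w-1}$, or $c\cdot 2^{c-1}$), and a multinomial interleaving factor is polynomial-time computable. The actual formula (Theorem~\ref{thm:allcrowns} / Lemma~\ref{lem:partcount}) is a \emph{sum over all partitions} of the components into classes that ``sort together'' (via the cross-component operations (f)--(h) of Observation~\ref{obs:SortingScenarios}), with a product and multinomial inside each summand. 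That exponential sum over partitions is precisely where the hardness lives, and it dictates the source problem: the paper reduces from \algprobm{Multiset-Partition} (via \algprobm{Multiset-Equal-Partition}), encoding multiset elements as crown sizes so that partitions of the multiset correspond to ``sort together'' partitions of the crowns. Your candidate source problems (perfect matchings, permutation factorizations) do not obviously align with this partition structure, and you give no gadget that would make them align.

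The extraction step is also underspecified in a way that matters. The oracle returns one number that is a sum over exponentially many partitions, and you must isolate the contribution of the partitions you care about. Your proposal to do this by polynomial interpolation over instances padded with dummy components does not come with any argument that the unwanted terms organize themselves into a polynomial in a controllable parameter; adding components changes the entire lattice of partitions, not just one coefficient. The paper instead rigs the crown sizes so that the total sum is $2p$ for a prime $p$ larger than $\binom{2n+2}{n+1}$, and shows (Lemma~\ref{lem:partdivis}, via Kummer's theorem on the multinomial coefficient) that modulo $p$ every term vanishes except those from the trivial partition and from equal-size, equal-sum bipartitions; the desired count is then recovered as $u^{-1}w \bmod p$. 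Something playing this role --- a mechanism that provably annihilates all parasitic terms --- is the missing idea in your proposal, and without it the Turing reduction does not go through.
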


\begin{remark}
We establish \Thm{thm:MainCounting} in the special case when the adjacency graph (see Definition~\ref{def:adjgraph}) is a disjoint union of cycles. A related question that remains open is  whether this $\#\textsf{P}$-completeness holds when the adjacency graph consists of only paths. 
\end{remark}

We also improve the known computational complexity of the $\#\algprobm{Median}$ problem in the Single Cut or Join model. Mikl\'os \& Smith \cite{MiklosSmithSamplingCounting} previously showed that counting the number of medians---the $\#\algprobm{Median}$ problem---belongs to $\textsf{FP}$. We improve this complexity-theoretic upper bound as follows:

\begin{theorem} \label{thm:MainMedian} \label{thm:MedianSCJ}
In the Single Cut or Join model, the $\#\algprobm{Median}$ problem belongs to $\textsf{FL}$.
\end{theorem}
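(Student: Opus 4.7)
My plan is to refine the polynomial-time algorithm of Mikl\'os and Smith by showing that its main computation factors into pieces that are each computable in deterministic logspace. The three ingredients I will rely on are (i) the adjacency-set characterization of the SCoJ distance due to Feij\~ao and Meidanis, (ii) Reingold's theorem that undirected connectivity is in $\textsf{L}$, and (iii) the fact that iterated multiplication of polynomially many polynomially-bounded integers lies in $\textsf{FL}$.

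First I would recall the distance formula for SCoJ: each genome $G_i$ is encoded by its set of adjacencies $A_i$, and the distance is $d(G,G')=|A(G)\triangle A(G')|$. For an input triple $G_1,G_2,G_3$, set $v_a:=|\{\,i:a\in A_i\,\}|$ for each potential adjacency $a$. A valid adjacency set $M$ is a median iff it minimizes $\sum_a\bigl([a\in M](3-v_a)+[a\notin M]\,v_a\bigr)$, subject to the genome-validity constraint that each extremity appears in at most one adjacency of $M$. Every $v_a=3$ adjacency is forced into $M$, every $v_a=0$ adjacency is forbidden, and the ``active'' adjacencies with $v_a\in\{1,2\}$ are the ones over which a nontrivial choice is made: we prefer to include those with $v_a=2$, exclude those with $v_a=1$, but the validity constraint forces tradeoffs whenever an extremity is incident to several active edges.

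Second, I would construct an auxiliary graph $H$ whose vertices are the extremities not already saturated by a forced ($v_a=3$) adjacency, and whose edges are the active adjacencies. Since every extremity participates in at most three input adjacencies total, each vertex of $H$ has small degree, and the connected components of $H$ are simple structured subgraphs (paths and cycles, possibly with a few ``twin'' edges) whose local contribution to the number of medians depends only on the length and edge-labeling of the component. The count of medians then decomposes as
\[
\#\mathrm{Median}(G_1,G_2,G_3)\;=\;\prod_{C\text{ component of }H} f(C),
\]
where $f(C)$ has a closed-form expression in the length of $C$ and the number of $v_a=2$ vs.\ $v_a=1$ edges in $C$.

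Third, I would argue that each step above fits in $\textsf{FL}$. Building $H$ implicitly (deciding, given an extremity and a label, what its incident edges are) takes only logspace bookkeeping on the three input adjacency sets. Using Reingold's theorem we can enumerate the components of $H$ in logspace by selecting the lexicographically least vertex of each component as a canonical representative. For each component $C$ we traverse it in logspace to read off the parameters entering $f(C)$ and then compute the closed-form expression. Finally, the product over components is an iterated multiplication of polynomially many polynomially-bounded integers, which is in $\textsf{FL}$. The main obstacle is the structural analysis establishing the product formula with logspace-computable $f(C)$: this requires a careful case analysis of how the forced edges and active edges interact around each extremity, and a verification that no ``long-range'' coupling between components is introduced by the validity constraint. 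Once the local formula is pinned down, the logspace implementation is essentially immediate from (ii) and (iii).
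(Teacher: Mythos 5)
Your logspace implementation plan (Reingold's connectivity algorithm to enumerate components, a closed-form count per component, and iterated multiplication of polynomially bounded integers) is exactly the machinery the paper uses, and that part is sound. The genuine gap is in the combinatorial reduction that precedes it. First, the problem is posed for a collection of $k\geq 3$ genomes, not just a triple, and for every odd $k$ --- in particular $k=3$ --- Feij\~ao and Meidanis already show the median is \emph{unique} (the majority adjacencies are automatically conflict-free, since two conflicting adjacencies sharing an extremity cannot both occur in more than half of the genomes). So in the only case you analyze, the answer is identically $1$ and there are no ``tradeoffs'' among the $v_a\in\{1,2\}$ adjacencies at all; the entire counting difficulty lives in the even-$k$ case, which your proposal never reaches.

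Second, your identification of the ``active'' edge set is wrong, and this breaks the structural claims downstream. Adjacencies with $v_a<k/2$ are never worth including (omitting one strictly improves the objective and omission never violates validity), so they should be discarded outright rather than placed in the auxiliary graph $H$; the only genuine choices are the \emph{tied} adjacencies with $v_a=k/2$, which contribute equally whether included or excluded. It is precisely because each tied adjacency consumes $k/2$ of the $k$ slots at each of its extremities that the conflict graph has maximum degree $2$, so its components are paths and cycles and $f(C)$ is simply the number of matchings in a path or cycle of length $\ell$, namely $\sum_{j}\binom{\ell-j}{j}$ and $\sum_{j}\tfrac{\ell}{\ell-j}\binom{\ell-j}{j}$ respectively. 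With your larger edge set the degree bound fails (an extremity can meet three active edges when $k=3$), the components need not be paths or cycles, and the proposed $f(C)$ depending on the mix of $v_a=2$ versus $v_a=1$ edges does not correspond to a correct local count. You flag this structural analysis as ``the main obstacle,'' and indeed it is the step that does not go through as written; once it is replaced by the tied-adjacency conflict graph and the matching count, the $\textsf{FL}$ implementation you describe carries over essentially verbatim.
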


\begin{remark} \label{rmk:MedianComplexity}
While it is widely believed that $\textsf{FP} \neq \#\textsf{P}$, no such separation is known. In particular, Allender~\cite{Allender1999ThePR} showed that the permanent is not computable using uniform $\textsf{TC}^{0}$ circuits, which implies  $\textsf{TC}^{0} \neq \#\textsf{P}$. While there has been subsequent work on lower bounds against the permanent for both threshold and arithmetic circuits \cite{KoiranPerifel, JansenSanthanam2012, JansenSanthanam2013}, it remains open as to whether even $\textsf{NC}^{1} \neq \textsf{\#P}$. It is known that $\textsf{TC}^{0} \subseteq \textsf{NC}^{1} \subseteq \textsf{L}$. While it is believed that each of these containments are strict, no such separation is known. 
\end{remark}

\noindent \\ \textbf{Further Related Work.} There has been significant work on efficient computational approaches, such as sampling  and approximation (see, for instance,~\cite{MiklosKissTannier, MiklosSmithSamplingCounting, DarlingMiklosRagan, DurrettNielsenYork, MiklosTannier, LargetSimonKadaneSweet}), to cope with the intractability of enumeration. In addition to the problems in Table~\ref{fig:Fig1}, we run into issues of combinatorial explosion when examining statistics such as the breakpoint reuse~\cite{Alekseyev2010, BergeronMixtacki} and the size/positions of reversals~\cite{AjanaLefebvre, DarlingMiklosRagan}.  Developing an efficient uniform or near-uniform sampler would allow for obtaining a statistically significant sample for hypothesis testing. Such samples are needed, for instance, to test the Random~Breakpoint~Model~\cite{Alekseyev2010, BergeronMixtacki} and check if there is natural selection for maintaining balanced
replichors~\cite{DarlingMiklosRagan}. Jerrum,~Valiant,~\&~Vazirani~\cite{JERRUM1986169} showed that for self-reducible problems, finding a near-uniform sampler has the same complexity as approximately enumerating the size of the space. Thus, approximate counting and sampling are closely related. Past work on these samplers has often utilized a rapidly mixing Markov chain on the full evolutionary history space~\cite{MiklosSmith2019}. 

The complexity of the \algprobm{Pairwise Rearrangement} problem in the reversal model remains an intriguing open question. Sorting scenarios in the reversal model correspond to so-called \textit{pressing sequences} on an appropriate vertex-colored graph \cite{HannenhalliPevzner}. There has been considerable work in studying these pressing sequences \cite{BixbyFlintMiklos, CooperDavis, CooperWhitlatch, CooperGartlandWhitlatch, CooperHannaWhitlatch, WhitlatchThesis}.

While our work in this paper focuses on genomes without repeated genes, consideration for gene duplications has biological motivations. Indeed, gene duplications are widespread events and have been recognized as driving forces of evolution~\cite{Bailey2006PrimateSD, LynchOriginnGenome}. In human genomes, segmental duplications are common sites for non-allelic homologous recombination that lead to genomic disorders, copy-number polymorphisms, and gene and transcript innovations~\cite{JiangTangVenturaCardone}. It is $\textsf{NP}$-hard to compute the distance in the presence of duplicate genes for both the reversal~\cite{ReversalDuplicateGenes} and DCJ \cite{KececiogluSankoffDCJ} models. There has been work on heuristics using optimization techniques~\cite{ReversalDuplicateGenes, LaohakiatLursinapSuksawatchon, SuksawatchonLursinapBoden, DCJDuplicate}, resulting in computer packages that work well in practice~\cite{MSOAR, MSOAR2}. There has also been work in the Single Cut or Join model in the presence of duplicate genes---see, for instance,~\cite{SCoJDuplicate} and the references therein.

\section{Preliminaries}

\subsection{Genome Rearrangement} \label{sec:GenomeRearrangement}

For a standard reference, see \cite{FertinLabarreRusaTannierVialette}.

\begin{definition}
    A \emph{genome} is an edge-labeled directed graph in which each label is unique and the total degree of each vertex is 1 or 2 (in-degree and out-degree combined). In particular, a genome consists of disjoint paths and cycles. The components of a genome we call \emph{chromosomes}. Each edge begins at its \emph{tail} and ends at its \emph{head}, collectively referred to as its \emph{extremities}.  Degree $2$ vertices are called \emph{adjacencies}, and degree $1$ vertices are called \emph{telomeres}. {See Figure~\ref{adjacency-fig}.}
\end{definition}

\begin{figure}[!h]
\centering
\begin{tikzpicture}
\draw[ultra thick, >=stealth] (0,1) edge[<-] (1,0) (1,0) edge[->] (2,1) (2,1) edge[->] (3,0);
\draw[ultra thick, >=stealth] (4,0) edge[<-] (6,0) (6,0) edge [->] (5,1) (5,1) edge [<-] (4,0);
\draw (0.7,0.8) node {$X_1$} (1.8,0.2) node {$X_2$} (2.8,0.8) node {$X_3$} (4.2,0.8) node {$X_4$} (5.8,0.8) node {$X_5$} (5.1,-0.4) node {$X_6$};
\draw (-1,1) edge[->, >=stealth] (-0.1,1) (-2.2,1) node {telomere \footnotesize{$X_1^h$}} (6.1,0) edge[<-, >=stealth] (7,0) (8.5,0) node {adjacency \footnotesize{$X_5^tX_6^t$}}; 
\draw[thick, decorate, decoration={calligraphic brace, amplitude=3mm}] (-0.1,1.2) -- (3.1,1.2) (1.5,1.7) node {linear chromosome}; 
\draw[thick, decorate, decoration={calligraphic brace, amplitude=3mm}] (3.9,1.2) -- (6.1,1.2) (5,1.7) node {chromosome} (5,2) node {circular};
\draw[thick, decorate, decoration={calligraphic brace, mirror, amplitude=3mm}] (0,-0.5) -- (6,-0.5) (3,-1) node {Genome};
\draw[thick, decorate, decoration={calligraphic brace, mirror, amplitude=3mm}] (-0.2,0.8) -- (0.8,-0.2) (-0.1, 0) node[rotate=-45] {gene};
\end{tikzpicture}
\caption{An edge-labeled genome.}
\label{adjacency-fig}
\end{figure}
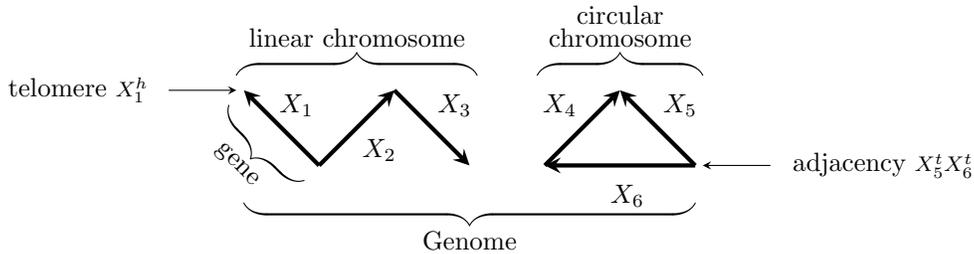

Adjacencies can be viewed as unordered sets of two extremities, and telomeres as sets containing exactly one extremity. For simplicity, we write adjacency $\{a,b\}$ as $ab$ and telomere $\{c\}$ as $c$.  For example, the adjacency $X_{5}^{t} X_{6}^{t}$ in Figure~\ref{adjacency-fig} denotes that the tail of the edge $X_{5}$ and the tail of the edge $X_{6}$ meet, and the telomere $X_1^h$ is where the edge $X_{1}$ ends. Each genome is then uniquely defined by its set of adjacencies and telomeres.

Consider the following operations on a given genome: 
\begin{enumerate}[itemsep=0pt]
    \item[ (i)] \emph{Cut}: an adjacency $ab$ is separated into two telomeres, $a$ and $b$,
    \item[ (ii)] \emph{Join}: two telomeres $a$ and $b$ become one adjacency, $ab$,
    \item[ (iii)] \emph{Cut-join}:  we replace adjacency $ab$ and telomere $c$ with adjacency $ac$ and telomere $b$, and
    \item[ (iv)] \emph{Double-cut-join}: we replace adjacencies $ab$ and $cd$ with adjacencies $ac$ and $bd$.
\end{enumerate}

\begin{figure}[h!]
\centering
\begin{subfigure}{0.7\textwidth}
\centering
\begin{tikzpicture}
\draw[ultra thick, >=stealth, xshift=-0.5cm] (0,1) edge[<-] (1,0) (1,0) edge[->] (2,1) (2,1) edge[->] (3,0);
\draw[xshift=-0.5cm] (0.2,0.2) node {$X_1$} (1.8,0.2) node {$X_2$} (2.8,0.8) node {$X_3$};
\draw (3.5,0.5) edge [->, >=stealth] (5.5,0.5) (4.5,0.2) node {cut} (4.5,0.8) node {(i)};
\draw[ultra thick, >=stealth, xshift=0.5cm] (6,1) edge[<-] (7,0) (7,0) edge[->] (8,1) (9,1) edge[->] (9,0);
\draw[xshift=0.5cm] (6.2,0.2) node {$X_1$} (7.8,0.2) node {$X_2$} (8.6,0.6) node {$X_3$}; 
\end{tikzpicture}
\end{subfigure}

\vspace{0.5cm}
\begin{subfigure}{0.7\textwidth}
\centering
\begin{tikzpicture}
\draw[ultra thick, >=stealth, xshift=-0.5cm] (0,1) edge[<-] (1,0) (1,0) edge[->] (2,1) (2,1) edge[->] (3,0);
\draw[xshift=-0.5cm] (0.2,0.2) node {$X_1$} (1.8,0.2) node {$X_2$} (2.8,0.8) node {$X_3$};
\draw (3.5,0.5) edge [->, >=stealth] (5.5,0.5) (4.5,0.2) node {join} (4.5,0.8) node {(ii)};
\draw[ultra thick, >=stealth, xshift=0.5cm] (6,1) edge[<-] (7,0) (7,0) edge[->] (8,1) (8,1) edge[->] (6,1);
\draw[xshift=0.5cm] (6.2,0.2) node {$X_1$} (7.8,0.2) node {$X_2$} (7,0.75) node {$X_3$};
\draw[white, xshift=0.5cm] (8,1) edge[->] (9,0);
\end{tikzpicture}
\end{subfigure}

\begin{subfigure}{0.7\textwidth}
\centering
\begin{tikzpicture}
\draw[ultra thick, >=stealth, xshift=-0.5cm] (0,1) edge[<-] (1,0) (1,0) edge[->] (2,1) (2,1) edge[->] (3,0);
\draw[xshift=-0.5cm] (0.2,0.2) node {$X_1$} (1.8,0.2) node {$X_2$} (2.8,0.8) node {$X_3$};
\draw (3.5,0.5) edge [->, >=stealth] (5.5,0.5) (4.5,0.2) node {cut-join} (4.5,0.8) node {(iii)};
\draw[ultra thick, >=stealth, xshift=0.5cm] (7,1) edge[<-, bend right=60, looseness=1.2] (7,0) (7,0) edge[->, bend right=60, looseness=1.2] (7,1) (8,1) edge[->] (9,0);
\draw[xshift=0.5cm] (6.2,0.2) node {$X_1$} (7.8,0.2) node {$X_2$} (8.8,0.8) node {$X_3$};
\draw[white, xshift=0.5cm] (7,1.5) node {$X_3$};
\end{tikzpicture}
\end{subfigure}

\begin{subfigure}{0.7\textwidth}
\centering
\begin{tikzpicture}
\draw[ultra thick, >=stealth, xshift=-0.5cm] (0,1) edge[<-] (1,0) (1,0) edge[->] (2,1) (2,1) edge[->] (3,0);
\draw[xshift=-0.5cm] (0.2,0.2) node {$X_1$} (1.8,0.2) node {$X_2$} (2.8,0.8) node {$X_3$};
\draw (3.5,0.5) edge [->, >=stealth] (5.5,0.5) (4.5,0.2) node {double-cut-join} (4.5,0.8) node {(iv)};
\draw[ultra thick, >=stealth, xshift=0.5cm] (6,1) edge[<-] (7,0) (8,1) edge[->] (7,0) (8,1) edge[->] (9,0);
\draw[xshift=0.5cm] (6.2,0.2) node {$X_1$} (7.8,0.2) node {$X_2$} (8.8,0.8) node {$X_3$}; 
\draw[white, xshift=0.5cm] (7,1.5) node {$X_3$};
\end{tikzpicture}
\end{subfigure}
\caption{(i) Adjacency $X_2^hX_3^t$ is cut. (ii) Telomeres $X_1^h$ and $X_3^h$ are joined. (iii) Adjacency $X_2^hX_3^t$ is cut, and resulting telomere $X_2^h$ is joined with $X_1^h$. (iv) Adjacencies $X_1^tX_2^t$ and $X_2^hX_3^t$ are replaced with $X_1^tX_2^h$ and $X_2^tX_3^t$.} 
\label{operations-fig}
\end{figure}
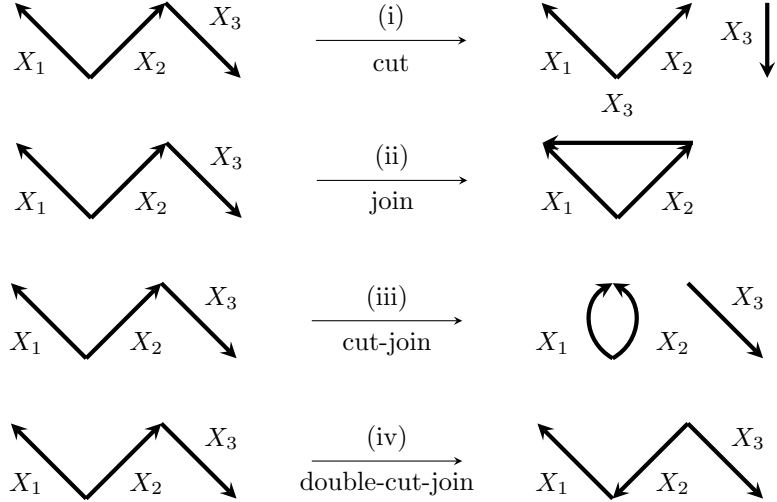

Note that a cut-join operation combines a single cut and a join into one operation, and a double-cut-join operation performs two cuts and two joins in one operation. See Figure~\ref{operations-fig} for an illustration of these operations.

Several key models are based on these operations.  The \textit{Double Cut-and-Join~(DCJ)}~model was initially introduced by Yancopoulos,~Attie,~\&~Friedberg~\cite{YancopoulosAttieFriedberg} and permits all four operations. Later, \mbox{Feijao~\&~Meidanis~\cite{FeijaoMeidanis}} introduced the \textit{Single~Cut~or~Join~(SCoJ)}~model, which only allows operations (i) and (ii). Alternatively, the \textit{Single~Cut-and-Join~(SCaJ)}~model~\cite{BergeronMedvedevStoye} allows operations (i)-(iii), but not operation (iv). 
In this paper, we consider the Single Cut-and-Join and Single Cut or Join models.

\begin{definition}
    For any genome rearrangement model $J$, it is always possible to perform a sequence of operations from $J$ that transforms genome $G_1$ into $G_2$ if they share the same set of edge labels.  Such a sequence is called a \emph{scenario}. 
   The minimum length of such a scenario is called the \emph{distance} and is denoted $d^{J}(G_1, G_2)$. When $J$ is understood, we simply write $d(G_{1}, G_{2})$. 
   An operation on a genome $G_1$ that (strictly) decreases the distance to genome $G_2$ is called a \emph{sorting operation} for $G_1$ and $G_2$. A scenario requiring $d(G_{1}, G_{2})$ operations to transform $G_{1}$ into $G_{2}$ is called a \emph{most parsimonious scenario} or \emph{sorting scenario}. When $G_{2}$ is understood, we refer to the action of transforming $G_{1}$ into $G_{2}$ using the minimum number of operations as \textit{sorting} $G_{1}$. The number of most parsimonious scenarios transforming $G_{1}$ into $G_{2}$ is denoted $\#\text{MPS}(G_{1}, G_{2})$. 
\end{definition}

We now turn to defining the key algorithmic problems that we will consider in this paper.

\begin{definition} \label{def:DistancePairwiseRearrangement}
Let $J$ be a model of genome rearrangement, and let $G_{1}$ and $G_{2}$ be genomes. The \algprobm{Distance} problem asks to compute $d(G_{1}, G_{2})$. The \algprobm{Pairwise Rearrangement} problem asks to compute $\#$MPS$(G_1,G_2)$. 
\end{definition}

\begin{definition} \label{def:Medians}
Let $J$ be a model of genome rearrangement, and let $\mathcal{G}$ be a collection of $k \geq 3$ genomes. A \emph{median} for $\mathcal{G}$ is a genome $G$ that minimizes
\[
\sum_{G_{i} \in \mathcal{G}} d(G_{i}, G).
\] 
\noindent The \algprobm{Median} problem asks for one median for $\mathcal{G}$.
 The \algprobm{\#Median} problem asks for the number of medians for $\mathcal{G}$. The \algprobm{Most Parsimonious Median Scenarios} problem, or \algprobm{Median Scenarios} problem, asks for the number of tuples $(G_{m}, \{ \sigma_{G} : G \in \mathcal{G}\})$, where $G_{m}$ is a median for $\mathcal{G}$ and $\sigma_{G}$ is a most parsimonious scenario transforming $G_{m}$ into $G$.
\end{definition}

To investigate these computational problems, we begin by introducing the adjacency graph. 

\begin{definition}\label{def:adjgraph}
    Given two genomes $G_1$ and $G_2$ with the same set of edge labels, the \emph{adjacency graph} $A(G_1,G_2)$ is a bipartite multigraph $(V_{1} \dot \cup V_{2}, E)$ 
    where each vertex in $V_i$ corresponds to a unique adjacency or telomere in $G_i$ and the number of edges between two vertices is the size of the intersection of the corresponding adjacency or telomere sets. 
\end{definition}

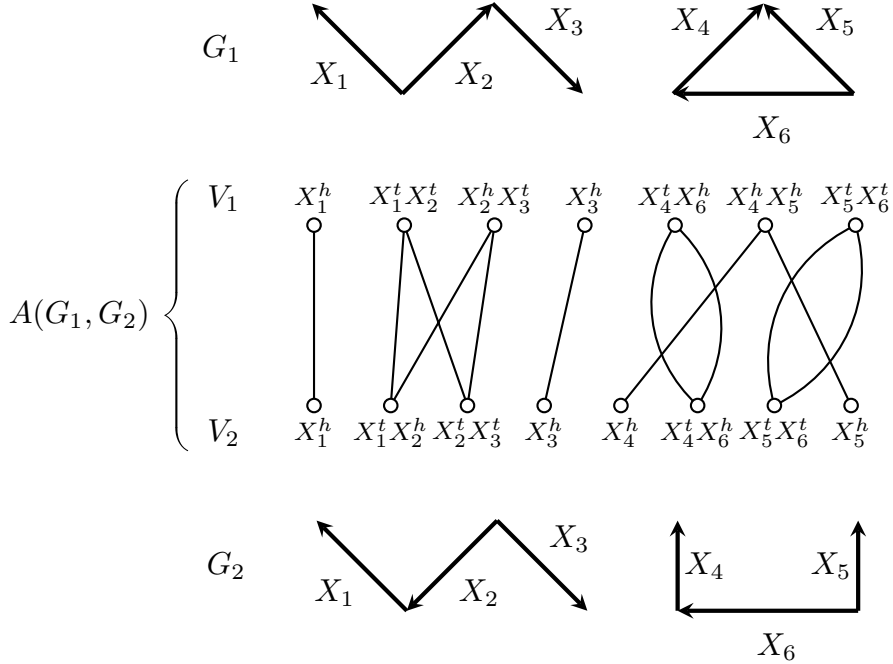
\begin{figure}[!h]
\centering
\begin{minipage}{0.8\textwidth}
\centering
\hspace{2.1cm}\begin{tikzpicture}[scale=1.2, every node/.style={scale=1.2}]
\draw[ultra thick, >=stealth] (0,1) edge[<-] (1,0) (1,0) edge[->] (2,1) (2,1) edge[<-] (3,0) (3,0) edge[<-] (4,1);
\draw[ultra thick, >=stealth] (5,0) edge[<-] (7,0) (7,0) edge [->] (6,1) (6,1) edge [<-] (5,0);
\draw (0.2,0.2) node {$X_1$} (1.8,0.2) node {$X_2$} (2.8,0.8) node {$X_3$} (3.8,0.2) node {$X_4$} (5.2,0.8) node {$X_5$} (6.8,0.8) node {$X_6$} (6.1,-0.4) node {$X_7$};
\draw[thick] (-1,0.5) node {$G_1$};
\end{tikzpicture}
\end{minipage}

\begin{minipage}{0.8\textwidth}
\centering
\begin{tikzpicture}[scale=1.2, every node/.style={scale=1.2}]
\draw[thick] (0,0) -- (0,-2) (3,0)-- (1.9,-2) -- (1,0) -- (0.9,-2) -- (2,0) --(2.9,-2) --(3,0)  (4,0) -- (3.75,-2) (5.25,-2) edge[bend right=40] (5,0) (5,0) edge[bend right=40] (5.25,-2) (4.4,-2) -- (6,0) -- (6.95,-2) (7,0) edge[bend left=40] (6.1,-2) (7,0) edge[bend right=40] (6.1,-2); 
\draw[thick] (0,0) node [myStyle] {} (1,0) node [myStyle] {} (2,0) node [myStyle] {} (3,0) node [myStyle] {} (4,0) node [myStyle] {} (5,0) node [myStyle] {} (6,0) node [myStyle] {} (7,0) node [myStyle] {};
\draw[thick] (0,-2) node [myStyle] {} (0.9,-2) node [myStyle] {} (1.9,-2) node [myStyle] {} (2.9,-2) node [myStyle] {} (3.75,-2) node [myStyle] {} (4.4,-2) node [myStyle] {} (5.25,-2) node [myStyle] {} (6.1,-2) node [myStyle] {} (6.95,-2) node [myStyle] {};
\draw[thick, white] (-1.5,0.5) node {$G_1$};
\draw[thick] (0,0.3) node {\footnotesize{$X_1^h$}} (1,0.3) node {\footnotesize{$X_1^tX_2^t$}} (2,0.3) node {\footnotesize{$X_2^hX_3^h$}} (3,0.3) node {\footnotesize{$X_3^tX_4^h$}} (4,0.3) node {\footnotesize{$X_4^t$}} (5,0.3) node {\footnotesize{$X_5^tX_7^h$}} (6,0.3) node {\footnotesize{$X_5^hX_6^h$}} (7,0.3) node {\footnotesize{$X_6^tX_7^t$}};
\draw[thick] (0,-2.3) node {\footnotesize{$X_1^h$}} (0.9,-2.3) node {\footnotesize{$X_1^tX_2^h$}} (1.9,-2.3) node {\footnotesize{$X_2^tX_3^t$}} (2.9,-2.3) node {\footnotesize{$X_3^hX_4^h$}} (3.75,-2.3) node {\footnotesize{$X_4^t$}} (4.4,-2.3) node {\footnotesize{$X_5^h$}} (5.25,-2.3) node {\footnotesize{$X_5^tX_7^h$}} (6.1,-2.3) node {\footnotesize{$X_6^tX_7^t$}} (6.95,-2.3) node {\footnotesize{$X_6^h$}};
\draw (-1.0,0.3) node {$V_1$};
\draw  (-1.0,-2.3) node {$V_2$};
\draw[thick, decorate, decoration={calligraphic brace, amplitude=3mm}] (-1.4,-2.5) -- (-1.4,0.5) (-2.6,-1.0) node {$A(G_1,G_2)$}; 
\end{tikzpicture}
\end{minipage}

\begin{minipage}{0.8\textwidth}
\centering
\hspace{2.1cm}\begin{tikzpicture}[scale=1.2, every node/.style={scale=1.2}]
\draw[ultra thick, >=stealth] (0,1) edge[<-] (1,0) (2,1) edge[->] (1,0) (2,1) edge[->] (3,0) (3,0) edge[<-] (4,1);
\draw[ultra thick, >=stealth] (5.3,0) edge[<-] (6.5,0) (6.5,0) edge [->] (7,1) (4.8,1) edge [<-] (5.3,0);
\draw (0.2,0.2) node {$X_1$} (1.8,0.2) node {$X_2$} (2.8,0.8) node {$X_3$} (3.8,0.2) node {$X_4$} (5.4,0.5) node {$X_5$} (6.4,0.5) node {$X_6$} (6.1,-0.4) node {$X_7$};
\draw[thick] (-1,0.5) node {$G_2$};
\draw[thick, white] (0.5,1.4) node {$X_6$};
\end{tikzpicture}
\end{minipage}
\caption{An adjacency graph $A(G_1,G_2)$ is shown in the middle, with genomes $G_1$ and $G_2$ shown above and below, respectively.}
\label{fig:adj graph}
\end{figure}

Note that each vertex in an adjacency graph $A(G_1,G_2)$ must have either degree 1 or 2 (corresponding, respectively, to telomeres and adjacencies in the original genome), and so  $A(G_1,G_2)$ is composed entirely of disjoint cycles and paths. Note also that every operation on $G_1$ corresponds to an operation on $V_1$ in $A(G_1,G_2)$. For example, in Figure~\ref{fig:adj graph} the cut operation on $G_1$ which separates adjacency $X_5^hX_6^h$ into telomeres $X_5^h$ and $X_6^h$ equates to separating the corresponding vertex $X_5^hX_6^h$ in $V_1$ into two vertices $X_5^h$ and $X_6^h$, thus splitting the path of length 2 into two disjoint paths of length 1 in $A(G_1,G_2)$. In a similar fashion, a join operation on $G_1$ corresponds to combining two vertices $a$, $b$ in $V_1$ into a single vertex $ab$, and a cut-join operation on $G_1$ corresponds to replacing vertices $ab$, $c$ in $V_1$ with vertices $ac$, $b$. Whether or not an operation on $A(G_{1}, G_{2})$ corresponds to a sorting operation on $G_1$---that is, whether it decreases the distance to $G_2$ or not---depends highly on the structure of the components acted on. To better describe such sorting operations, we adopt the following classification:

\begin{definition} Components of $A(G_{1}, G_{2})$ are classified as follows, where the \emph{size} of a component $B$ is defined to be $\lfloor \, |E(B)|/2\, \rfloor$:
\begin{itemize}
    \item A $W$-\emph{shaped component} is an even path with its two endpoints in $V_1$. 
    \item An $M$-\emph{shaped component} is an even path with its two endpoints in $V_2$.
    \item An $N$-\emph{shaped component} is an odd path, further called a \emph{trivial path} if it is size 0 (a single edge).
    \item A \emph{crown} is an even cycle, further called a \emph{trivial crown} if it is size 1 (a $2$-cycle).
\end{itemize}  \end{definition}

The language ``trivial'' is motivated by the fact that such components indicate where $G_1$ and $G_2$ already agree, and hence no sorting operations are required on vertices belonging to trivial components (see, e.g., the trivial components in Figure~\ref{fig:adj graph}). Indeed, a sorting scenario can be viewed as a minimal length sequence of operations which produces an adjacency graph consisting of only trivial components.

\begin{observation} \label{obs:SortingScenarios}
\noindent In the SCaJ model, a case analysis yields precisely these sorting operations on $A(G_1,G_2)$: 
\begin{enumerate}[itemsep=0pt]
\item[(a)] A cut-join operation on a non-trivial $N$-shaped component, producing an $N$-shaped component and a trivial crown
\item[(b)] A cut-join operation on a $W$-shaped component of size at least 2, producing a trivial crown and a $W$-shaped component
\item[(c)] A join operation on a $W$-shaped component of size 1, producing a trivial crown
\item[(d)] A cut operation on an $M$-shaped component, producing two $N$-shaped components
\item[(e)] A cut operation on a non-trivial crown, producing a $W$-shaped component
\item[(f)] A cut-join operation on an $M$-shaped component and a $W$-shaped component, where an adjacency in the $M$-shaped component is cut and joined to a telomere in the $W$-shaped component, producing two $N$-shaped components

 \item[(g)] A cut-join operation on a non-trivial crown and an $N$-shaped component, where an adjacency in the crown is cut and joined to the telomere in the $N$-shaped component, producing an $N$-shaped component
\item[(h)] A cut-join operation on a non-trivial crown and a $W$-shaped component, where an adjacency from the crown is cut and joined with a telomere from the $W$-shaped component, producing a $W$-shaped component
\end{enumerate}
\end{observation}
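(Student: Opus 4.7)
The plan is to prove the observation by a direct case analysis on all SCaJ operations, verified against the known closed-form distance formula for the SCaJ model due to Bergeron, Medvedev, and Stoye \cite{BergeronMedvedevStoye}. That formula expresses $d(G_1, G_2)$ as a sum of contributions over the non-trivial components of $A(G_1, G_2)$, with each contribution depending only on the component's type (N, W, M, or crown) and size. By the triangle inequality, any single operation changes $d$ by at most $1$, so an operation is a sorting operation if and only if it strictly decreases $d$ by exactly~$1$.

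The enumeration is organized by operation type and by the local structure it acts upon in $A(G_1, G_2)$. A cut acts on a single degree-$2$ vertex of $V_1$, which lies in an N-, W-, or M-shaped path or in a crown. A join acts on two degree-$1$ vertices of $V_1$, each sitting at an endpoint of an N- or W-shaped component; the two telomeres may lie in the same component (closing it into a cycle or merging its two endpoints) or in different components. A cut-join acts on one adjacency and one telomere in $V_1$, with the two vertices lying in the same component or in two distinct components of any compatible types. For each sub-case I record the resulting components---tracking which become trivial---and apply the distance formula to compute the change in $d$. The sub-cases yielding a change of exactly $-1$ aggregate precisely to the list (a)--(h); all others either leave $d$ unchanged or strictly increase it, and are therefore not sorting.

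The main obstacle is the proliferation of sub-cases for the cut-join operation, whose inputs may lie in components of several different types and whose effect depends on which specific pair of extremities ends up forming the new adjacency. Careful bookkeeping is required to decide whether the operation merges two components, splits a component into two, or merely reorganizes extremities within a single component, and to determine whether any of the resulting components are trivial. Once these sub-cases are laid out, each reduces to a routine application of the distance formula, and consolidating the sorting sub-cases produces exactly items (a)--(h).
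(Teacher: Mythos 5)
Your plan is correct and matches the paper's own (implicit) justification: the paper states this as an Observation supported only by "a case analysis," verified against the Bergeron--Medvedev--Stoye distance formula that it quotes immediately afterward, which is exactly the strategy you describe. Your additional remarks---that each operation changes $d$ by at most $1$, so ``sorting'' means ``decreases $d$ by exactly $1$,'' and that completeness requires checking the non-sorting sub-cases as well---are the right organizing principles for carrying the case analysis through.
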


Note that (a) - (e) are sorting operations on $G_1$ that operate on only one component in the adjacency graph, though they may produce two different components. On the other hand, (f) - (h) are sorting operations on $G_1$ that operate on two separate components in the adjacency graph. 

Using these sorting operations on $A(G_1,G_2)$, the distance between two genomes $G_1$ and $G_2$ for the SCaJ model is given by
\begin{equation}
    d(G_1,G_2) = n - \frac{\#N}{2} - \#T + \#C
\end{equation}
where $n$ is the number of genes in $G_1$ (equivalently, one half of the number of edges in $A(G_1,G_2)$), $\#N$ is the number of $N$-shaped components, $\#T$ is the number of trivial crowns, and $\#C$ is the number of non-trivial crowns \cite{BergeronMedvedevStoye}.

Let $\mathcal{B}$ be the set of all components of $A(G_1,G_2)$ and let $\mathcal{B}'$ be a subset of $\mathcal{B}$. Define
\begin{align} 
d(\mathcal{B}'):=\left(\sum_{B\in\mathcal{B}'}\text{size}(B)\right)-\#T_{\mathcal{B}'}+\#C_{\mathcal{B}'} \label{eq:SubsetDistance}
\end{align} 
where $\#T_{\mathcal{B}'}$ and $\#C_{\mathcal{B}'}$ are the number of trivial crowns and nontrivial crowns in $\mathcal{B}'$, respectively. The quantity $d(\mathcal{B}')$ is the minimum number of operations needed to transform all components of $\mathcal{B}'$ into trivial components, with no operation acting on a component not belonging to $\mathcal{B}'$. Note that $d(\mathcal{B})=d(G_1,G_2)$, as the $\frac{\#N}2$ term is absorbed into the summation of the sizes of all components.

\begin{definition}\label{def:equivrelation}
Let $A$ and $B$ be components of an adjacency graph, and consider a particular sorting scenario. We say $A\sim B$ if either $A=B$ or there is a cut-join operation in the scenario where an extremity $a$ from $A$ and an extremity $b$ from $B$ are joined into an adjacency. The transitive closure of $\sim$ is an equivalence relation which we call \emph{sort together}. We will be particularly interested in subsets of the equivalence classes of ``sort together.'' We abuse terminology by referring to such a subset as a set that \emph{sorts together}.
\end{definition}

Note that if two components in $A(G_1, G_2)$ sort together, the cut-join witness of this does not need to occur immediately. For example, two non-trivial crowns $C_1$ and $C_2$ can sort together by first cutting $C_1$ to produce a $W$-shaped component, then applying operation (b) multiple times before using operation (h) to sort $C_2$ and the remaining $W$-shaped component together.

We will now introduce additional notation that we will use in this paper. Let $\mathcal{B}$ be the collection of all components of a given adjacency graph $A(G_{1}, G_{2})$. Let $\Pi(\mathcal{B})$ denote the set of all partitions of $\mathcal{B}$. Define $\#\text{MPS}(\mathcal{B})$ to be the number of most parsimonious scenarios transforming $G_{1}$ into $G_{2}$. For a partition $\pi \in \Pi(\mathcal{B})$, define $\#\text{MPS}(\mathcal{B}, \pi)$ to be the number of most parsimonious scenarios transforming $G_{1}$ into $G_{2}$, where two components $A$ and $B$ belong to the same part of $\pi$ if and only if $A$ and $B$ sort together. For a subset $\mathcal{B}'$ of $\mathcal{B}$, let $\#\text{ST}( \mathcal{B}')$ denote the number of sequences with $d(\mathcal{B}')$ operations in which the components of  $\mathcal{B}'$ sort together and are transformed into trivial components with no operation acting on a component not belonging to $\mathcal{B}'$.

We conclude by restricting our attention to a single component of an adjacency graph and determine the number of most parsimonious scenarios which sort that component, independent of the other components. We will later use these counts as building blocks to enumerate the most parsimonious scenarios for multiple components in the  adjacency graph.

\begin{lemma}
Let $A(G_{1}, G_{2})$ be an adjacency graph with component $B$. 
\begin{enumerate}[label=(\alph*)]
    \item If $B$ is an $N$-shaped component, then $\#\text{ST}(\{B\}) = 1$. 
    \item If $B$ is a $W$-shaped component of size $w$, then $\#\text{ST}(\{B\})=2^{w-1}$.
    \item If $B$ is a $M$-shaped component of size $m$, then $\#\text{ST}(\{B\}) = 2^{m-1}$.
    \item If $B$ is a non-trivial crown of size $c$, $\#\text{ST}(\{B\}) = c\cdot 2^{c-1}$.
\end{enumerate}
\end{lemma}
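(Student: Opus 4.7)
The plan is to handle each case using the classification of sorting operations in Observation~\ref{obs:SortingScenarios}. Since items (f)--(h) of that observation each involve two distinct components, the only sorting operations available while operating on the single component $B$ in isolation are (a) for $N$-components, (b) and (c) for $W$-components, (d) for $M$-components, and (e) for crowns. Each part of the lemma thus reduces to counting the applications of a restricted family of operations.

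For parts (a) and (b), I would induct on the size of $B$. An $N$-component has a unique telomere in $V_{1}$, and an application of operation (a) creates a crown whose size equals the graph distance in $B$ from the cut $V_{1}$-adjacency to this telomere; insisting that this crown be trivial forces the cut to be adjacent to the telomere. Hence the sorting operation is uniquely determined, and recursion on the residual $N$-component of size $n-1$ yields $\#\text{ST}(\{B\})=1$. A $W$-component of size $w \geq 2$ has two $V_{1}$-telomeres, so operation (b) admits two sorting choices, each leaving a $W$-component of size $w-1$; together with the unique operation (c) available at size $1$, induction gives $\#\text{ST}(\{B\}) = 2^{w-1}$.

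For part (c), operation (d) is the only option. Cutting the $i$-th $V_{1}$-adjacency of the $M$-component (for $i \in \{0,\ldots,m-1\}$) splits $B$ into two $N$-components of sizes $i$ and $m-1-i$. By part (a), each of these sorts deterministically, and the $m-1$ subsequent operations act on disjoint components and so can be interleaved in $\binom{m-1}{i}$ ways, yielding
\[
\#\text{ST}(\{B\}) = \sum_{i=0}^{m-1} \binom{m-1}{i} = 2^{m-1}.
\]
Part (d) is handled analogously: operation (e) cuts any of the $c$ $V_{1}$-adjacencies of the non-trivial crown, producing a $W$-component of size $c$, which sorts in $2^{c-1}$ ways by part (b), so $\#\text{ST}(\{B\}) = c \cdot 2^{c-1}$.

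The main obstacle is verifying in parts (a) and (b) that no cut-join other than those enumerated above is a sorting operation. Concretely, I would use the subset-distance formula \eqref{eq:SubsetDistance} to check that any cut-join on an $N$- or $W$-component that produces a non-trivial crown, or that merely reroutes the component without breaking off a trivial crown, does not strictly decrease $d(\{B\})$ and hence is not sorting. Once this uniqueness step is in place, the binomial identity in part (c) and the reduction to part (b) in part (d) are routine.
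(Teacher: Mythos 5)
Your proof is correct, and parts (a), (b), and (d) follow essentially the same route as the paper: induction on the size for $N$- and $W$-components using the uniqueness (resp.\ two-fold choice) of the sorting cut-join, and reduction of the crown case to the $W$-case after the initial cut. The one genuine difference is part (c). The paper disposes of $M$-shaped components by a symmetry argument: relabeling $G_1$ and $G_2$ puts the sorting scenarios of an $M$-component of size $m$ in bijection with those of a $W$-component of size $m$, so the count $2^{m-1}$ is inherited. You instead count directly: the first operation is one of the $m$ cuts of a $V_1$-adjacency, splitting $B$ into $N$-components of sizes $i$ and $m-1-i$, each of which sorts in a unique order by part (a), and the two deterministic sequences can be interleaved in $\binom{m-1}{i}$ ways, giving $\sum_{i=0}^{m-1}\binom{m-1}{i}=2^{m-1}$. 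Your version is slightly longer but more self-contained, and it sidesteps the (unstated) verification that reversing the roles of $G_1$ and $G_2$ really does induce a bijection on scenarios; the paper's version is shorter and reuses part (b) wholesale. Your closing remark about checking, via the subset-distance formula \eqref{eq:SubsetDistance}, that no other cut-join is a sorting operation is exactly the content the paper delegates to the case analysis of Observation~\ref{obs:SortingScenarios}, so nothing is missing.
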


\begin{proof}
First, suppose $B$ is a non-trivial $N$-shaped component of size $n$.  There is only one sorting operation: cut-join with the telomere in $V_1$, which creates an $N$-shaped component of size $n-1$ and a trivial crown. The result follows as there is only one available sorting operation at each step.

Next, suppose $B$ is a non-trivial $W$-shaped component of size $w$. If $w=1$, there is only one sorting operation: join the two telomeres, yielding a trivial crown. 
 If $w>1$, there are two possible sorting operations: cut-join with either the left or right telomere, either of which produces a $W$-shaped component of size $w-1$ and a trivial crown. The result follows by induction on $w$.
 
 By relabeling $G_1$ and $G_2$, we see that the sorting scenarios for a $W$-shaped component of size $w$ are in bijection with the sorting scenarios for an $M$-shaped component of size $w$. Hence the result for $M$-shaped components matches the result for $W$-shaped components.

 Finally, suppose $B$ is a non-trivial crown of size $c$. The first sorting operation must be a cut of any of the $c$ adjacencies in $V_1$.  The resulting component is a $W$-shaped component of size $c$, for which there are $2^{c-1}$ scenarios. The result follows.
\end{proof}

\subsection{Computational Complexity}

We assume familiarity with standard notions such as $\textsf{P}$ and $\textsf{NP}$. For standard references, see~\cite{AroraBarak, ComplexityZoo}. A \emph{logspace transducer} is a $3$-tape deterministic Turing Machine with a read-only input tape, a work tape for which we can use $O(\log n)$ cells, and a write-only output tape where the tape-head never moves left. It is well-known, but not obvious, that the composition of two logspace transducers is a logspace transducer (see \cite{Sipser13} for a proof). A decision problem $A$ belongs to $\textsf{L}$ if there exists a logspace transducer $M$ that decides $A$. In particular, for decision problems $A, B$, if $B \in \textsf{L}$ and $A$ is logspace reducible to $B$, then $A \in \textsf{L}$. Denote by $\textsf{FP}$ the class of polynomial-time computable functions, and denote by $\textsf{FL}$ the class of logspace-computable functions. We begin by introducing key background necessary for our $\textsf{\#P}$-completeness result (Theorem~\ref{thm:MainCounting}).

\begin{definition}
We say that the function $f : \{0,1\}^{*} \to \mathbb{N}$ belongs to $\textsf{\#P}$ if there exists a non-deterministic Turing machine $M$ that runs in polynomial time such that $f(x)$ returns the number of accepting computations of $M$ on input $x$.
\end{definition}

\begin{remark}
Intuitively, for a language (decision problem) $L \in \textsf{NP}$, $\textsf{\#P}$ contains the functions that count the number of witnesses certifying that the given string $x$ belongs to $L$. There is a subtlety in that a function $f : \{0,1\}^{*} \to \mathbb{N}$ belonging to $\textsf{\#P}$ is defined based on the underlying Turing machine (algorithm) as opposed to the language itself.
\end{remark}

There are several notions of counting reductions. We recall them here:

\begin{definition}
Let $f, g \in \textsf{\#P}$. A \emph{parsimonious reduction} from $f$ to $g$ is a polynomial-time computable function $\varphi : \{0,1\}^{*} \to \{0,1\}^{*}$ such that for all $x \in \{0,1\}^{*}$, $f(x) = g(\varphi(x))$. We write $f \leq_{par}^{\textsf{P}} g$ to indicate that there is a parsimonious reduction from $f$ to $g$.
\end{definition}

\begin{definition}[{\cite{Zank1991PCompletenessVM}}]
Let $f, g \in \textsf{\#P}$. We say that there exists a \emph{weak parsimonious reduction} from $f$ to $g$, denoted $f \leq_{\text{wpar}}^{\textsf{P}} g$, if there exist polynomial-time computable functions $\varphi : \{0,1\}^{*} \to \{0,1\}^{*}$ and $\psi : \mathbb{N} \to \mathbb{N}$ such that $f(x) = (\psi \circ g \circ \varphi)(x)$ for all $x \in \{0,1\}^{*}$. 
\end{definition}

Intuitively, under a weak parsimonious reduction the number of witnesses for $x$ (which is $f(x)$) should be the same as some function of the number of witnesses for $\varphi(x)$. Noting that the number of witnesses for $\varphi(x)$ is $g(\varphi(x))$, we thus require $f(x) = \psi(g(\varphi(x)))$. Observe as well that a parsimonious reduction is a special case of a weak parsimonious reduction: take $\psi$ to be the identity function.

We next introduce the notion of a Turing reduction. To this end, we need the notion of an oracle. We refer the reader to \cite{AroraBarak} for the full technicalities of oracle Turing machines; for our purposes, the key intuitive notion suffices. Let $\mathcal{O}$ be a computable function, which we refer to in this context as an \emph{oracle}. An \emph{oracle Turing machine} is a Turing machine that is parameterized by an oracle, in which the Turing machine is able to query the image of a given string under the parameter oracle. When the oracle is unspecified, we write $M^{\square}$. When the oracle $\mathcal{O}$ is specified, we write $M^{\mathcal{O}}$. While querying an oracle takes a single step, writing down the string to test does not take a single step. 

We are particularly interested in the following use of oracle Turing machines:

\begin{definition}
Let $f, g \in \textsf{\#P}$. We say that $f$ is \emph{polynomial-time Turing reducible} to $g$, denoted $f \leq_{T}^{\textsf{P}} g$, if there exists an oracle Turing machine $M^{g}$ that computes $f$ and halts in polynomial time.
\end{definition}

\begin{remark}
It is well-known that if $f \leq_{\text{wpar}}^{\textsf{P}} g$, then $f \leq_{T}^{\textsf{P}} g$ \cite{Zank1991PCompletenessVM}. When considering a notion of $\textsf{\#P}$-completeness, we want a reduction with the property that if $g$ is $\textsf{\#P}$-complete with respect to our reduction and $g \in \textsf{FP}$, then $\textsf{FP} = \textsf{\#P}$. To this end, it suffices to consider polynomial-time Turing reductions \cite{Valiant, AroraBarak}. However, in certain areas, it is sometimes desirable to preserve counts, which motivates the need for the more stringent (weak) parsimonious reductions.
\end{remark}

\begin{definition}
Let $\mathcal{C}$ be a complexity class, and let $\leq_{r}$ be the ordering induced by $r$-reductions. We say that $f$ is $\mathcal{C}$-\emph{hard} under $r$-reductions if for all $g \in \mathcal{C}$, $g \leq_{r} f$. We say that $f \in \mathcal{C}$ is \emph{complete} under $r$-reductions if (i) $f \in \mathcal{C}$, and (ii) $f$ is $\mathcal{C}$-hard under $r$-reductions. 
\end{definition}

While our $\textsf{\#P}$-completeness results will be with respect to  polynomial-time Turing reductions, many of the intermediary constructions yield either weak parsimonious or parsimonious reductions.

We now recall some basic notions of circuit complexity, which are necessary for Theorem~\ref{thm:MainMedian}. Let $k \in \mathbb{N}$. We say that a decision problem $K$ belongs to (uniform) $\textsf{AC}^{k}$ if there exist a (uniform) family of circuits $(C_{n})$ of depth $O(\log^{k} n)$ and polynomial size over the $\textsf{AND}, \textsf{OR}, \textsf{NOT}$ gates, where $\textsf{AND}$ and $\textsf{OR}$ have unbounded fan-in, such that $x \in K \iff C_{|x|}(x) = 1$. The complexity class $\textsf{TC}^{k}$ is defined identically as $\textsf{AC}^{k}$, except that the circuits now also have access to $\textsf{Majority}$ gates of unbounded fan-in. Observe that $\textsf{AC}^{k} \subseteq \textsf{TC}^{k}$. It is well-known that
\[
\textsf{AC}^{0} \subsetneq \textsf{TC}^{0} \subseteq \textsf{L} \subseteq \textsf{NL} \subseteq \textsf{AC}^{1}.
\]  

\noindent The $\textsf{NC}$ hierarchy is
\[
\textsf{NC} = \bigcup_{k \in \mathbb{N}} \textsf{AC}^{k} = \bigcup_{k \in \mathbb{N}} \textsf{TC}^{k}.
\]

\noindent It is known that $\textsf{NC} \subseteq \textsf{P}$, and it is believed that this containment is strict.

\section{Enumerating All-Crowns Sorting Scenarios}

Let $\mathcal{C}$ be a collection of crowns. Recall that $\Pi(\mathcal{C})$ is the set of all partitions of $\mathcal{C}$. For any partition $\pi = (\pi_1, \ldots, \pi_k)$ in $\Pi(\mathcal{C})$, let $g_i$ be the sum of the sizes of the crowns in $\pi_i$ and $p_i$ be the number of crowns in $\pi_i$.

\begin{theorem}\label{thm:allcrowns}
Consider an adjacency graph consisting entirely of a collection $\mathcal{C}=\{C_1, C_2, \ldots, C_q\}$ of $q$ crowns where $C_i$ has size $c_i$. 
Then  
\begin{equation}\label{eqn:allcrowns}
\#\text{MPS}(\mathcal{C}) = \sum_{\pi \in \Pi(\mathcal{C})} \binom{d}{g_1+p_1,  \ldots, g_{k} +p_{k}}  2^{d+q-3k} \prod_{i=1}^q c_i \left(\prod_{j=1}^{k} \left(\prod_{\ell=0}^{p_j-2} (g_j+\ell) \right)\right)
\end{equation}
where $d := d(\mathcal{C})$ and, for each $\pi\in \Pi(\mathcal{C})$, $k=k(\pi)$ is the number of parts in $\pi$. 
\end{theorem}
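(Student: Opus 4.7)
The plan is to decompose $\#\mathrm{MPS}(\mathcal{C})$ using the ``sort together'' equivalence relation of Definition~\ref{def:equivrelation}. Each sorting scenario induces a partition $\pi \in \Pi(\mathcal{C})$ in which two crowns lie in the same part iff they sort together, so $\#\mathrm{MPS}(\mathcal{C}) = \sum_{\pi} \#\mathrm{MPS}(\mathcal{C}, \pi)$. Since no operation acts on components from two different parts, each scenario realizing $\pi$ decomposes uniquely as a shuffle of the per-part scenarios. The shuffle contributes a multinomial coefficient $\binom{d}{d(\pi_1), \ldots, d(\pi_k)}$, and by the subset-distance formula \eqref{eq:SubsetDistance} each $d(\pi_j) = g_j + p_j$. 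Hence
\[
\#\mathrm{MPS}(\mathcal{C}, \pi) = \binom{d}{g_1 + p_1, \ldots, g_k + p_k} \prod_{j=1}^k \#\mathrm{ST}(\pi_j).
\]

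The main task is to compute $\#\mathrm{ST}(\pi_j)$ for a single sort-together group consisting of crowns of sizes $c_1, \ldots, c_p$ with total $g = \sum_s c_s$; we claim this equals $2^{g + 2p - 3}\prod_s c_s \prod_{\ell=0}^{p-2}(g + \ell)$. By Observation~\ref{obs:SortingScenarios}, any such scenario begins with operation (e) (a cut of one crown, producing a $W$-shape) and thereafter interleaves reductions (operations (b) and (c)) with operation (h) merges of the remaining $p - 1$ crowns. We track the $W$-shape size $w$ as a walk that starts at $0$, jumps up by $c_{\sigma(1)}$ on operation 1, transitions by $-1$ (reduction) or $+c$ ((h)-merge of a size-$c$ crown), and ends at $0$; the walk must remain $\geq 1$ at all intermediate positions, lest the $W$-shape disappear prematurely. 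A key observation is that whenever $w$ reaches $1$, the next operation is either an (h)-merge (since a reduction would send $w$ to $0$ prematurely) or the very final reduction; hence exactly one reduction happens at $w = 1$. The initial cut then contributes $c_{\sigma(1)}$, each (h) contributes $4c$ (adjacency in the crown, telomere in the $W$-shape, and extremity pairing), the unique reduction at $w = 1$ contributes $1$, and the remaining $g - 1$ reductions contribute $2$ each, yielding a per-interleaving factor $2^{g-1} \cdot 4^{p-1} \cdot \prod_s c_s = 2^{g+2p-3}\prod_s c_s$.

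The principal obstacle is then the combinatorial identity
\[
\sum_{\sigma}\,\bigl|\bigl\{(r_1,\ldots,r_p) \in \mathbb{Z}_{\geq 0}^p : \textstyle\sum_s r_s = g,\ \sum_{s \leq t} r_s \leq \sum_{s \leq t} c_{\sigma(s)} - 1 \text{ for all } t < p\bigr\}\bigr| = \prod_{\ell=0}^{p-2}(g + \ell),
\]
where $\sigma$ ranges over orderings of the $p$ crowns and $r_s$ counts the reductions between the $s$-th and $(s+1)$-st crown-processings. To prove this we plan to introduce the refined enumerator $\mathcal{N}(w, S)$ counting sorting sequences from an initial $W$-shape of size $w$ with remaining crown multiset $S$, and establish by induction on $|S|$ the closed form
\[
\mathcal{N}(w, S) = w \cdot \prod_{\ell=1}^{|S|-1}(w + g_S + \ell) \cdot \prod_{c \in S} c \cdot 2^{w + g_S + 2|S| - 1} \qquad (w \geq 1),
\]
where $g_S := \sum_{c \in S} c$. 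The induction is driven by the recurrence $\mathcal{N}(w, S) = 2\,\mathcal{N}(w-1, S) + \sum_{c \in S} 4c\,\mathcal{N}(w + c, S \setminus \{c\})$ for $w \geq 2$, with base cases supplied by $\mathcal{N}(w, \emptyset) = 2^{w-1}$ (the lemma just preceding this section) and $\mathcal{N}(1, S) = \sum_{c \in S} 4c\,\mathcal{N}(1+c, S \setminus \{c\})$ (no reduction is permitted at $w = 1$ when $S \neq \emptyset$). Combining $\mathcal{N}(0, S) = \sum_{c \in S} c\,\mathcal{N}(c, S \setminus \{c\})$ (encoding the obligatory first cut) with the closed form at $|S| - 1$ then yields $\#\mathrm{ST}(\pi_j) = \mathcal{N}(0, \pi_j) = 2^{g + 2p - 3}\prod_s c_s \prod_{\ell=0}^{p-2}(g + \ell)$, and summing over $\pi \in \Pi(\mathcal{C})$ completes the proof.
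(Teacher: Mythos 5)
Your proof is correct, and its first half coincides with the paper's: the decomposition of $\#\text{MPS}(\mathcal{C})$ over sort-together partitions, with the multinomial shuffle factor and $d(\pi_j)=g_j+p_j$ from~\eqref{eq:SubsetDistance}, is exactly Lemma~\ref{lem:partcount}, and your target value for a single sort-together group agrees with Proposition~\ref{prop:allcrownjoin} (there written as $(\prod_i c_i)\,2^{d+q-3}\prod_{k=2}^{q}(d-k)$ with $d=g+p$, which is your $2^{g+2p-3}\prod_s c_s\prod_{\ell=0}^{p-2}(g+\ell)$ after reindexing). Where you genuinely diverge is in how that single-group count is established. The paper fixes the order in which the crowns are first touched, writes the number of admissible interleavings as a nested sum over the numbers of reductions performed between consecutive merges, and collapses that nested sum via the bespoke identities of Lemmas~\ref{lem:ksum} and~\ref{lem:crownidentity}. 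You instead introduce the state-based enumerator $\mathcal{N}(w,S)$, derive the recurrence $\mathcal{N}(w,S)=2\mathcal{N}(w-1,S)+\sum_{c\in S}4c\,\mathcal{N}(w+c,S\setminus\{c\})$ with the boundary behavior at $w=1$ and $w=0$ forced by the sort-together requirement (a join at $w=1$ would orphan the remaining crowns, since no operation in Observation~\ref{obs:SortingScenarios} merges two $W$-shaped components), and verify a product closed form by induction. I checked that the closed form satisfies the recurrence (the key cancellation is $(w-1)(w+g)+pw+g=w(w+g+p-1)$) and that $\mathcal{N}(0,S)$ reproduces the paper's formula, including the cases $|S|\in\{1,2\}$. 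Your route buys a shorter, more mechanical verification---one induction replaces two summation lemmas---and the intermediate quantity $\mathcal{N}(w,S)$ is reusable (e.g., for a part containing a $W$-shaped component together with crowns); the paper's route keeps the combinatorial meaning of each factor (which adjacency is cut, which telomere receives it, where the reductions sit) fully explicit. One small caveat: state the closed form only for $|S|\geq 1$, since at $S=\emptyset$ it would read $w\cdot 2^{w-1}$ rather than the correct $2^{w-1}$; your induction only ever invokes the closed form for nonempty $S$ and uses $\mathcal{N}(w,\emptyset)=2^{w-1}$ directly as the base, so this is purely a matter of phrasing.
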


\begin{proof}
See Appendix~\ref{app:AllCrowns}.
\end{proof}

We will now discuss some consequences that will be useful later.

\begin{corollary}\label{lem:crowns1}
Let $\mathcal{C}=\{C_1,C_2,\ldots,C_{2n}\}$ be a collection of $2n$ crowns for some positive integer $n$. Denote by $c_i$ the size of crown $C_i$. Suppose that $\sum_{i=1}^{2n}c_i=2p-2n$ for some odd prime $p$. The number of most parsimonious scenarios where all of the crowns sort together is
\[
2^{2p+2n-3}\prod_{i=1}^{2n}c_i\prod_{\ell=0}^{2n-2}(2p-2n+\ell).
\]
\end{corollary}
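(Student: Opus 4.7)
The plan is to recognize that this quantity is exactly the single $\pi$-summand of \eqref{eqn:allcrowns} corresponding to the coarsest partition $\pi^{\star} \in \Pi(\mathcal{C})$, whose unique part is $\{C_1,\ldots,C_{2n}\}$. Since ``sort together'' is an equivalence relation (\Def{def:equivrelation}), its classes form a partition of $\mathcal{C}$, and all crowns sort together precisely when that partition has a single class. The quantity sought is therefore $\#\text{MPS}(\mathcal{C},\pi^{\star})$, which is the $\pi=\pi^{\star}$ term extracted from the sum in \Thm{thm:allcrowns}.

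First I would verify that every $C_i$ is non-trivial, so that $\#T=0$ and $\#C=2n$ when invoking \eqref{eq:SubsetDistance}. If all $c_i = 1$ then $\sum c_i = 2n = 2p-2n$ would force $p = 2n$, contradicting that $p$ is an odd prime; so at least one crown has size at least $2$. More generally, a trivial crown corresponds to an adjacency already agreeing in $G_1$ and $G_2$, and the only way such a crown can sort together with another component is via a cut-join using its unique adjacency---but this cut would then have to be reversed, contradicting parsimoniousness. Hence in any most parsimonious scenario in which all $2n$ crowns sort together, none can be trivial, and \eqref{eq:SubsetDistance} gives
\[
d := d(\mathcal{C}) = \sum_{i=1}^{2n} c_i + 2n = (2p-2n) + 2n = 2p.
\]

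Finally I would substitute $k = 1$, $q = 2n$, $p_1 = 2n$, $g_1 = 2p-2n$, and $d = 2p$ into the $\pi = \pi^{\star}$ summand of \eqref{eqn:allcrowns}. The multinomial coefficient collapses to $\binom{2p}{2p} = 1$, the power of two becomes $2^{d + q - 3k} = 2^{2p + 2n - 3}$, the product $\prod_{i=1}^{q} c_i$ is $\prod_{i=1}^{2n} c_i$, and the inner product $\prod_{\ell = 0}^{p_1 - 2}(g_1 + \ell)$ becomes $\prod_{\ell = 0}^{2n-2}(2p - 2n + \ell)$. Multiplying these factors together yields the claimed formula. There is no substantive obstacle here beyond careful index bookkeeping; all the real work is in \Thm{thm:allcrowns}, and this corollary merely packages the single-block special case in a form convenient for the mod-$p$ divisibility arguments that presumably follow.
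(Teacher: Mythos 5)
Your proposal is correct and follows essentially the same route as the paper: the paper's one-line proof likewise reads off the claimed quantity as the single-part ($k=1$) term of equation~\eqref{eqn:allcrowns} (equivalently, Proposition~\ref{prop:allcrownjoin}) with $q=2n$ and $d=2p$, and your substitutions $g_1=2p-2n$, $p_1=2n$ reproduce it exactly. Your additional check that the crowns are non-trivial (so that $d(\mathcal{C})=\sum_i c_i+2n=2p$) is a harmless, indeed welcome, piece of bookkeeping that the paper leaves implicit.
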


\begin{proof}
This is equation~\eqref{eqn:allcrowns} with the powers of $4$ and $2$ combined, setting $q=2n$ and $d=2p$. 
\end{proof}

\begin{corollary}\label{lem:crowns2}
Let $\mathcal{C}=\{C_1,C_2,\ldots,C_{2n} \}$ be a collection of $2n$ crowns for some positive integer $n$. Denote by $c_i$ the size of crown $C_i$. Suppose that $\sum_{i=1}^{2n}c_i=2p-2n$ for some odd prime $p$. Suppose $\mathcal{C}$ is partitioned by $(\pi_1, \pi_2)$ with $|\pi_{1}| = |\pi_{2}|$ such that the crowns in each $\pi_i$ sort together and the sum of the sizes of the crowns in $\pi_{i}$ is $p-n$. Then
\[
\#\text{MPS}(\mathcal{C},(\pi_1,\pi_2)) = \binom{2p}{p}2^{2p+2n-6}\prod_{i=1}^{2n}c_i\prod_{\ell=0}^{n-2}(p-n+\ell)^2.
\]
\end{corollary}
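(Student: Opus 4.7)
The plan is to derive this corollary directly from \Thm{thm:allcrowns}, by observing that the formula displayed there is actually a sum, over all partitions $\pi\in\Pi(\mathcal{C})$, of an expression that (as the notation $\#\text{MPS}(\mathcal{C},\pi)$ suggests) equals the number of most parsimonious scenarios in which the components sort together exactly according to $\pi$. So rather than summing, I would isolate the single summand indexed by $\pi=(\pi_1,\pi_2)$ and simplify.

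First, I would compute the parameters of the summand using Equation~\eqref{eq:SubsetDistance}. Since $\mathcal{C}$ consists entirely of $2n$ nontrivial crowns with sizes summing to $2p-2n$, we have
\[
d := d(\mathcal{C}) = (2p-2n) - 0 + 2n = 2p,
\]
so the exponent of $2$ appearing in \eqref{eqn:allcrowns} is $d+q-3k = 2p + 2n - 6$, using $q=2n$ and $k=2$. Next, for the specified partition $(\pi_1,\pi_2)$, the hypothesis gives $p_1=p_2=n$ and $g_1=g_2=p-n$, so
\[
\binom{d}{g_1+p_1,\,g_2+p_2} = \binom{2p}{p,\,p} = \binom{2p}{p},
\]
and the double product collapses to
\[
\prod_{j=1}^{2}\prod_{\ell=0}^{p_j-2}(g_j+\ell) \;=\; \left(\prod_{\ell=0}^{n-2}(p-n+\ell)\right)^{\!2}.
\]

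Plugging these values into the summand of \eqref{eqn:allcrowns} yields exactly the claimed formula. The only step that requires a small justification is the interpretation of the summand as $\#\text{MPS}(\mathcal{C},\pi)$, which is consistent with \Def{def:equivrelation} together with the notational conventions established just before \Thm{thm:allcrowns}; in particular, the summand indexed by $\pi$ counts scenarios in which the sort-together equivalence classes are exactly the parts of $\pi$. Given that reading, the result is a direct substitution.

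Since this is essentially a bookkeeping corollary, the main obstacle is not mathematical depth but rather making sure the indexing matches: verifying that $q=2n$, $k=2$, and the $(g_j,p_j)$ values are read off correctly, and that the factors of $2$ combine to the claimed exponent $2p+2n-6$. Provided the summand interpretation of \Thm{thm:allcrowns} is correctly cited, no further combinatorial argument is needed.
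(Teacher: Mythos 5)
Your proposal is correct and follows exactly the paper's own route: the paper likewise reads off the single $k=2$ summand of equation~\eqref{eqn:allcrowns} with $q=2n$, $d=2p$, $g_1=g_2=p-n$, $p_1=p_2=n$, obtaining $\binom{2p}{p}$ for the multinomial coefficient and $2^{2p+2n-6}$ for the power of two. Your explicit verification that the summand indexed by $\pi$ is $\#\text{MPS}(\mathcal{C},\pi)$ (via Lemma~\ref{lem:partcount}) and your computation of $d(\mathcal{C})=2p$ from~\eqref{eq:SubsetDistance} are, if anything, slightly more careful than the paper's one-line justification.
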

\begin{proof}
The desired count is one term in equation~\eqref{eqn:allcrowns} arising from a partition with $k=2$ parts. Since both parts have sum $p-n$, we have $g_1=g_2=p-n$. Hence, the multinomial coefficient becomes $\binom{2p}{p}$. The result follows from combining the powers of $4$ and $2$ along with the recognition that $k=2$, $q=2n$ and $d=2p$.
\end{proof}

\begin{lemma}\label{lem:partdivis}
Let $A=\{a_1, a_2, \ldots, a_n\}$ be a multiset whose elements are each at least $3$ and sum to $kp$, with $p$ an odd prime and $k$ a positive integer. Let $\mathcal{C}=\{C_1, C_2, \ldots, C_n\}$ be a collection of $n$ crowns where $C_i$ has size $a_i-1$. Let $\pi$ be a partition of $\mathcal{C}$. If $\#$MPS($\mathcal{C},\pi$) is not divisible by $p$ then, in the corresponding partition $\pi'$ of $A$, (i) each part $\pi_{i}'=\{a_{i_1}, \ldots, a_{i_q}\}$ has $q \leq p$ elements, and (ii) $p \mid \sum_{j=1}^q a_{i_j}$.
\end{lemma}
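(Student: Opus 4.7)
The plan is to substitute the partition $\pi$ directly into the formula of Theorem~\ref{thm:allcrowns}, then exploit the fact that a product of positive integers is divisible by the prime $p$ only when some factor is. First I would set up notation: since $a_i \geq 3$, the crown $C_i$ has size $c_i := a_i - 1 \geq 2$, so every crown in $\mathcal{C}$ is non-trivial, and using (\ref{eq:SubsetDistance}) one gets
\[
d := d(\mathcal{C}) = \sum_{i=1}^n c_i + n = (kp - n) + n = kp.
\]
Let $\pi'$ have parts $\pi_1', \ldots, \pi_m'$. For a part $\pi_j'$ I would write $q_j := |\pi_j'|$ and $s_j := \sum_{a \in \pi_j'} a$, so that the quantities in Theorem~\ref{thm:allcrowns} are given by $p_j = q_j$ and $g_j = s_j - q_j$ (so $g_j + p_j = s_j$), with $\sum_{j=1}^m s_j = kp$.

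With these substitutions, Theorem~\ref{thm:allcrowns} yields
\[
\#\text{MPS}(\mathcal{C}, \pi) = \binom{kp}{s_1, \ldots, s_m} \cdot 2^{kp + n - 3m} \cdot \prod_{i=1}^n (a_i - 1) \cdot \prod_{j=1}^m \prod_{\ell=0}^{q_j - 2}(s_j - q_j + \ell),
\]
a product of positive integers. The assumption $p \nmid \#\text{MPS}(\mathcal{C}, \pi)$ therefore forces each of the displayed factors to be coprime to $p$. For part (i), I would observe that the inner product $\prod_{\ell=0}^{q_j - 2}(s_j - q_j + \ell)$ is a product of $q_j - 1$ consecutive positive integers. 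Since any $p$ consecutive integers contain a multiple of $p$, if $q_j \geq p + 1$ this factor would be divisible by $p$, a contradiction. Hence $q_j \leq p$ for every $j$, establishing (i).

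For part (ii), I would argue by contradiction: suppose some $s_j$ is not divisible by $p$, and reindex so that $j = 1$. Factoring the multinomial coefficient as
\[
\binom{kp}{s_1, s_2, \ldots, s_m} = \binom{kp}{s_1}\binom{kp - s_1}{s_2, \ldots, s_m},
\]
it suffices to prove $p \mid \binom{kp}{s_1}$. The base-$p$ expansion of $kp$ has units digit $0$, while $p \nmid s_1$ forces the units digit of $s_1$ in base $p$ to be positive; Lucas's theorem (via the vanishing factor $\binom{0}{(s_1)_0}$) then yields $\binom{kp}{s_1} \equiv 0 \pmod{p}$, a contradiction. So $p \mid s_j$ for all $j$, establishing (ii).

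The hard part is really just the bookkeeping: matching the crown-size notation $c_i = a_i - 1$ in Theorem~\ref{thm:allcrowns} to the multiset notation $a_i$ of the lemma, and verifying that $d = kp$ so that the multinomial coefficient takes the convenient form $\binom{kp}{s_1,\ldots,s_m}$. The remaining number-theoretic inputs---products of $p$ consecutive integers for (i) and Lucas's theorem for (ii)---are completely standard, so I do not anticipate any genuine obstacle.
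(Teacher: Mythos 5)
Your proposal is correct and follows essentially the same route as the paper: substitute the partition into the formula of Theorem~\ref{thm:allcrowns}, get part (i) from the product of $q_j-1$ consecutive integers, and get part (ii) by splitting off $\binom{kp}{s_1}$ from the multinomial coefficient and analyzing it in base $p$. The only cosmetic difference is that you invoke Lucas's theorem where the paper uses Kummer's theorem; both yield $p \mid \binom{kp}{c}$ whenever $p \nmid c$.
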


\begin{proof}
Let $m := |\pi|$. Consider an arbitrary part of $\pi'$, labeled without loss of generality as $\pi_{1}'=\{a_{1}, \ldots, a_{q}\}$. Set $c := \sum_{i=1}^{q} a_{i}$. We aim to show that if $p$ does not divide \begin{equation}\label{eqn:MPSpi}
\#\textrm{MPS}(\mathcal{C},\pi) =
\binom{kp}{g_1+p_1, \ldots, g_m +p_m}  2^{kp+n-3m} \prod_{i=1}^n c_i \left(\prod_{j=1}^m \left(\prod_{l=0}^{p_j-2} g_j+l \right)\right),
\end{equation} 

\noindent then (i) $q \leq p$ and (ii) $p|c$ , where \eqref{eqn:MPSpi} follows from \eqref{eqn:allcrowns}.

Indeed, since $p_1=q$, the product $\prod_{l=0}^{q-2}(g_1+l)$ is a factor of $\#$MPS($\mathcal{C},\pi$). This product of $q-1$ consecutive integers is divisible by each of $1,\ldots, q-1$. If $p$ does not divide~\eqref{eqn:MPSpi}, and hence not $\prod_{l=0}^{q-2}(g_1+l)$, we must have that $q-1<p$. This establishes that $q \leq p$.

Now, observe that
\[
\binom{kp}{g_1+p_1, g_2+p_2, \ldots, g_m +p_m} = 
\binom{kp}{g_1+p_1}\binom{kp-g_1-p_1}{g_2+p_2, \ldots, g_m +p_m}.
\]

\noindent As $g_{1} + p_{1} = c$, we have that if $p$ does not divide (\ref{eqn:MPSpi}), then $p \nmid \binom{kp}{c}$. We claim this implies $p\mid c$. To see this, we show the contrapositive, and assume $p\nmid c$. Writing $c$ in base $p$ yields $c= c_0+c_1p+\cdots+c_np^n$, where $c_0\neq 0$ since $p\nmid c$. Note that the units digit of $kp$ in base $p$ is $0$, since $p|kp$. Hence subtracting $c$ from $kp$ in base $p$ requires at least one borrow, and we have that $\binom{kp}{c}\equiv 0 \pmod p$ by Kummer's Theorem \cite{Kummer}, yielding $p\mid \binom{kp}{c}$ as desired.
\end{proof}

\section{\algprobm{Pairwise Rearrangement} is $\textsf{\#P}$-complete}

In this section, we establish the following result, restated from Section~\ref{sec:Introduction}:

\begin{thm:MainCounting} 
In the Single Cut-and-Join model, the
\algprobm{Pairwise Rearrangement} problem is \mbox{$\#\textsf{P}$-complete} under polynomial-time Turing reductions.
\end{thm:MainCounting}

\begin{remark}
We establish \Thm{thm:SCJCount} in the case when the adjacency graph consists only of crowns. 
\end{remark}

Our starting point in establishing Theorem~\ref{thm:SCJCount} is the \algprobm{Multiset-Partition} problem, which is known to be $\textsf{\#P}$-complete under parsimonious reductions \cite{Simon}. Precisely, the \algprobm{Multiset-Partition} problem takes as input a multiset $A = \{ a_{1}, \ldots, a_{n}\}$ with each $a_{i} > 0$ is an integer. The goal is to count the number of ways of partitioning $A$ into two multisets $B = \{ b_{1}, \ldots, b_{j}\}$ and $ C = \{ c_{1}, \ldots, c_{k}\}$ with equal sum, i.e. such that
\[
\sum_{i=1}^{j} b_{i} = \sum_{i=1}^{k} c_{i}.
\]

\noindent We say that $B$ and $C$ have equal size if $j = k$. The \algprobm{Multiset-Equal-Partition} is a variant of \algprobm{Multiset-Partition} in which we wish to count partitions of $A$ into two multisets in which both the sizes and sums are equal. We first establish the following:

\begin{proposition}
\algprobm{Multiset-Equal-Partition} is $\textsf{\#P}$-complete under polynomial-time Turing reductions.
\end{proposition}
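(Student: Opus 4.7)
The plan is to establish $\#\textsf{P}$-membership (which is immediate: a nondeterministic machine guesses a size-$|A|/2$ sub-multiset and verifies the sum condition) and then $\#\textsf{P}$-hardness via a polynomial-time Turing reduction from \algprobm{Multiset-Partition}, which is $\#\textsf{P}$-complete under parsimonious reductions by \cite{Simon}. A parsimonious reduction seems difficult because \algprobm{Multiset-Partition} lumps together partitions of every size whereas \algprobm{Multiset-Equal-Partition} restricts to one specific size, so I plan instead to issue multiple oracle queries and combine them.

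Given $A=\{a_1,\dots,a_n\}$ with sum $S$, set $N_j := \#\{B \subseteq A : |B|=j,\; \operatorname{sum}(B) = S/2\}$, so the number of equal-sum partitions is $\tfrac12 \sum_{j=0}^n N_j$, and $N_j = N_{n-j}$ by the complement bijection $B\leftrightarrow A\setminus B$. The reduction will construct, for each $k \in \{0,1,\dots,\lfloor n/2\rfloor\}$, a padded instance $A_k$ such that the oracle value $g(A_k)$ is an explicit linear combination of the $N_{k+\ell}$, and then invert the resulting system.

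The construction I have in mind is
\[
A_k \;:=\; \{a_1 + M,\,\dots,\,a_n + M\} \,\cup\, \{M\}^{\,n-2k}, \qquad M := S+1,
\]
whose entries are all positive integers. The design aims to make any side of an equal-size partition of $A_k$ contribute exactly $(n-k)M$ in ``$M$-mass,'' regardless of how the $n-k$ chosen elements on that side split between shifted originals and padding copies of $M$. Once this is verified, the equal-sum condition on $A_k$ collapses to $\operatorname{sum}(B) = S/2$ on the sub-multiset $B\subseteq A$ of un-shifted originals selected, and a direct count yields
\[
2\, g(A_k) \;=\; \sum_{\ell=0}^{n-2k} \binom{n-2k}{\ell}\, N_{k+\ell}.
\]

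These identities form a triangular system: the query at $k=\lfloor n/2\rfloor$ reads off $N_{\lfloor n/2\rfloor}$ directly (invoking $N_j=N_{n-j}$ if $n$ is odd), and each smaller $k$ determines $N_k$ after subtracting the previously computed $N_{k+\ell}$. Summing the recovered values recovers the \algprobm{Multiset-Partition} count in polynomial time using $O(n)$ oracle calls, each on an instance of polynomial size. I expect the main technical point to be the padding design: padding with zeros would likely violate the usual positive-integer input format, while padding with an arbitrary positive constant destroys the size-independence of the padding's contribution to each side's sum. The joint shift-and-pad-by-$M$ trick is what simultaneously resolves both concerns and makes the binomial identity exact; once this is in hand, the recursive extraction of the $N_k$ and the overall Turing reduction are routine.
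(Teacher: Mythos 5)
Your proposal is correct, but it reaches the hardness via a genuinely different gadget than the paper's. The paper also Turing-reduces from \algprobm{Multiset-Partition} using $O(n)$ padded instances, but its $k$-th instance is $A_k = A \cup \{t,\dots,t\} \cup \{(k+1)t\}$ with $k+1$ copies of $t := \sum_i a_i$; because these padding elements dominate the original ones, any equal-size, equal-sum partition of $A_k$ is forced to place the element $(k+1)t$ opposite all $k+1$ copies of $t$, so the $k$-th oracle answer \emph{directly} equals the number of equal-sum partitions of $A$ whose size imbalance is exactly $k$ (with a factor of $2$ at $k=0$ from swapping the two copies of $t$), and the final answer is just the sum of the oracle values. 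Your shift-and-pad construction $A_k = \{a_i + M\} \cup \{M\}^{n-2k}$ with $M = S+1$ instead makes every element carry exactly one unit of $M$-mass, so the equal-size constraint neutralizes the padding and the $k$-th oracle value becomes the binomial-weighted sum $\tfrac12\sum_\ell \binom{n-2k}{\ell} N_{k+\ell}$; your verification sketch is sound (indeed, since both sides have the same cardinality $n-k$, the $M$-contributions cancel exactly and the condition collapses to residue sums of $S/2$ each, and the symmetry $N_j = N_{n-j}$ makes the system triangular and invertible). The trade-off: the paper's gadget isolates each size-imbalance class in a single query so no linear algebra is needed, while yours is more symmetric but requires back-substitution through a (trivially) triangular system. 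Both yield valid polynomial-time Turing reductions with $O(n)$ oracle calls on polynomial-size instances, and your $\#\textsf{P}$-membership argument matches the paper's.
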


\begin{proof}
Clearly, \algprobm{Multiset-Equal-Partition} is in \textsf{\#P}. To show that it is \textsf{\#P}-hard, we provide a polynomial-time Turing reduction from \algprobm{Multiset-Partition} to \algprobm{Multiset-Equal-Partition}.

Let $A = \{ a_{1}, \ldots, a_{n}\}$ be an instance of \algprobm{Multiset-Partition}. Let $t := \sum_{i \in [n]} a_{i}$. For each $0 \leq k \leq n-2$, define the multiset $A_{k}$ as follows:
\[
A_{k} = \{a_1, \ldots, a_{n}, \underbrace{t,t,\ldots, t}_{k+1 \text{ copies}}, (k+1)t\}.
\]

Let $n_{k}$ denote the number of equal-size, equal-sum partitions of $A_{k}$. For $0 \leq k \leq n-2$, let $m_{k}$ be the number of equal-sum partitions $\{B,C\}$  of $A$ where $k$ is the absolute value of $|C|-|B|$. We will show that $m_{0} = n_{0}/2$, and for $1 \leq k \leq n-2$ we will show that $m_{k} = n_{k}$. As a result, \[N=\frac{n_{0}}{2} + \sum_{k=1}^{n-2} n_{k}\]  is the number of equal-sum partitions of $A$ and a quantity that we can compute in polynomial time relative to the \algprobm{Multiset-Equal-Partition} oracle.
 
When $k = 0$, note that $n_{0}$ is the number of equal-size, equal-sum partitions $\{ B, C\}$ of $A_{0}~=~\{ a_{1}, \ldots, a_{n}, t, t\}$. By our choice of $t$, it is necessary that $B$ and $C$ each contain a copy of $t$. As the elements of $A_{0}$ are distinguishable, we obtain a second partition by swapping the copies of $t$ between $B$ and $C$. Thus, each equal-size, equal-sum partition of $A$ corresponds to two equal-size, equal-sum partitions of $A_{0}$. It follows that $m_0 = n_{0}/2$, as desired.

For $k > 0$, we will show that the equal-sum partitions $\{B, C\}$ of $A$, with $k$ being the absolute value of $|C| - |B|$, are in bijection with the equal-size, equal-sum partitions of $A_{k}$. Fix an equal-sum partition $\{B, C\}$ of $A$ with $k$ as defined in the previous sentence. By adding $(k+1)$ copies of $t$ to $B$ and a single copy of $(k+1)t$ to $C$, we obtain an equal-size, equal-sum partition of $A_k$. So $m_{k} \leq n_{k}$. We now argue that all equal-size, equal-sum partitions $\{ B_{k}, C_{k}\}$ are of the form
\begin{equation*}
B_{k} = B \cup \{ \underbrace{t,t,\ldots, t}_{k+1 \text{ copies}} \} \text{ and }
C_{k} = C \cup \{ (k+1)t\},
\end{equation*}

\noindent where $\{ B, C\}$ is an equal-sum partition of $A$ with $k = |C| - |B|$. For contradiction, suppose $\{B'_{k}, C'_{k}\}$ is an equal-size, equal-sum partition of $A_{k}$ that is not of this form. Without loss of generality, suppose $(k+1)t$ belongs to $C'_{k}$. Since $\{B'_{k}, C'_{k}\}$ does not have the desired form, it is necessary that at least one copy of $t$ belongs to $C'_{k}$. Therefore, the sum of the elements in $C'_k$ is at least $(k+2)t$ while the sum of the elements in $B'_k$ is at most $kt+(t-1)$. Therefore $\{B'_k,C'_k\}$ is not an equal-sum partition, a contradiction. So $m_k=n_k$.
\end{proof}

\noindent In order to establish Theorem~\ref{thm:SCJCount}, we will establish a polynomial-time Turing reduction from \algprobm{Multiset-Equal-Partition} to \algprobm{Pairwise Rearrangement} in the SCaJ model. Let $A$ be an instance of \algprobm{Multiset-Equal-Partition}. Our first step will involve a series of transformations from $A$ to a new multiset $A''$, where (i) every equal-sum partition of $A''$ has equal size, and (ii) the number of equal-size, equal-sum partitions of $A''$ is twice that of $A$. Then starting from $A''$, we will construct a polynomial-time Turing reduction to \algprobm{Pairwise Rearrangement} in the SCaJ model. 

Without loss of generality, we may assume that $|A| = 2n$ for some integer $n$, and that $\sum_{x \in A} x$ is even. Otherwise, $A$ has no equal-size, equal-sum partitions. Let $a := 1 + \sum_{x \in A} x$. Define $A'$ to be the multiset obtained by adding $a$ to every element of $A$. We will later construct a multiset $A''$ from $A'$. We begin by establishing the following properties about $A'$:

\begin{claim}
$A'$ has the same number of equal-size, equal-sum partitions as $A$.
\end{claim}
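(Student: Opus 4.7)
The plan is to exhibit an explicit bijection $\Phi$ between the equal-size, equal-sum partitions of $A$ and those of $A'$. Define $\Phi(\{B,C\}) := \{B+a,\, C+a\}$, where $B+a$ denotes the multiset obtained from $B$ by adding $a$ to each element (and similarly for $C$). Since $x \mapsto x+a$ is a bijection on the integers that preserves multiplicities, the image is indeed a partition of $A'$, and distinct partitions of $A$ map to distinct partitions of $A'$.

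To verify that $\Phi$ restricts to a bijection on equal-size, equal-sum partitions, I would argue both directions directly. If $\{B,C\}$ has $|B|=|C|=n$ and $\sum B = \sum C$, then clearly $|B+a|=|C+a|=n$, and $\sum(B+a) = \sum B + na = \sum C + na = \sum(C+a)$. Conversely, the inverse $\Phi^{-1}$ subtracts $a$ from every element; this is a well-defined operation on $A'$ because every element of $A'$ is at least $a+1$. The same shift-by-$na$ computation shows that the inverse sends equal-size, equal-sum partitions of $A'$ back to equal-size, equal-sum partitions of $A$. Hence $\Phi$ is the desired bijection, establishing the claim.

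There is no real obstacle here; the argument is essentially a one-line bijection. I note that the particular choice $a = 1 + \sum_{x \in A} x$ plays no role in this claim (any positive integer $a$ would suffice for the bijection above). Rather, this specific choice of $a$ is set up for the next step, where it will force every equal-sum partition of $A'$ to automatically be equal-size: any size-imbalance of $j-k \neq 0$ produces a sum-gap of magnitude $|j-k|\cdot a$, which exceeds the maximum possible imbalance $\sum A = a-1$ coming from the underlying elements of $A$. That observation is what will be needed when $A'$ is subsequently refined to the multiset $A''$ used in the reduction to \algprobm{Pairwise Rearrangement}.
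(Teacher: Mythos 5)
Your proof is correct and takes essentially the same approach as the paper: both establish the bijection given by shifting every element by $a$, with the key observation that equal sizes force the sums of both parts to shift by the same amount $na$. Your added remark that the specific value of $a$ is only needed for the subsequent claim (that equal-sum forces equal-size in $A'$) is also accurate.
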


\begin{proof}
Observe that any equal-size, equal-sum partition of $A$ corresponds to a unique one in $A'$ where each element is increased by $a$. Similarly, every equal-size, equal-sum partition of $A'$ corresponds to a unique one in $A$, where each element is decreased by $a$. 
\end{proof}

\begin{claim} \label{Claim2}
Every partition of $A'$ with equal sum must also have equal size.
\end{claim}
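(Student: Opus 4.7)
The plan is to exploit the fact that $a$ was chosen to be strictly larger than the total sum $S := \sum_{x \in A} x$, so that differences in block sizes cannot be ``absorbed'' by differences in sums of the original elements.

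Concretely, I would let $\{B', C'\}$ be an arbitrary equal-sum partition of $A'$ and pull it back to a partition $\{B, C\}$ of $A$ by subtracting $a$ from every element (this is well defined since the map $x \mapsto x + a$ is a bijection from $A$ to $A'$). Write $|B| = j$ and $|C| = k$, so $j + k = 2n$. Then
\[
\sum_{y \in B'} y = \sum_{x \in B} x + ja \quad\text{and}\quad \sum_{y \in C'} y = \sum_{x \in C} x + ka,
\]
and the equal-sum hypothesis rearranges to
\[
\sum_{x \in B} x - \sum_{x \in C} x = (k - j)\,a.
\]

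The next step is a size bound: since $B, C \subseteq A$ are disjoint with nonnegative entries summing to at most $S$, the left-hand side has absolute value at most $S = a - 1 < a$. Therefore $|k - j|\,a < a$, which forces $|k - j| < 1$, and hence $k = j = n$. This shows $\{B', C'\}$ is equal-size, as required.

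There is essentially no technical obstacle here; the proof is a one-line arithmetic observation once the construction of $A'$ is in place. The only thing to be careful about is allowing repeated elements: $A$ is a multiset, so ``pulling back'' the partition should be phrased at the level of the multiset (equivalently, at the level of the index set $\{1,\ldots,2n\}$), not at the level of distinct values. With that caveat, the bound $\lvert \sum B - \sum C \rvert \le S < a$ holds verbatim and the claim follows.
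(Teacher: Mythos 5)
Your proof is correct and rests on exactly the same idea as the paper's: since $a = 1 + \sum_{x\in A} x$ exceeds the total sum of $A$, a size imbalance between the parts contributes a multiple of $a$ that cannot be cancelled by the original elements. The paper phrases this as a contrapositive with explicit upper and lower bounds on the two part-sums, while you derive the identity $\sum_{x\in B} x - \sum_{x\in C} x = (k-j)a$ and bound its absolute value by $a-1$; these are the same argument in different clothing.
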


\begin{proof}
 The proof is by contrapositive. Let $\{B',C'\}$ be a partition of $A'$ with $|B'|<|C'|$. Then for some integer $t \geq 1$, we have that $|B'|=n-t$ and $|C'|=n+t$. By construction, every element of $A'$ is greater than $a$. Thus, the sum of the elements of $B'$ is greater than $(n-t)a$ and less than $(n-t+1)a$. On the other hand, the sum of the elements of $C'$ is greater than $(n+t)a$, which is greater than $(n-t+1)a$. Therefore, this partition of $A'$ does not have parts with equal sum, as required.   
\end{proof}

Given $A'$, we now construct a second multiset $A''$. Let $b := \sum_{x \in A'} x = (2n+1)a-1$. As $a$ is odd, we have that $b$ is even. Choose an integer $c > b$ such that $b + 2c = 2p$, for some prime $p > \binom{2n+2}{n+1}$. (We will later construct a set $\mathcal{C}$ of crowns such that the pair $(p, \mathcal{C})$ satisfy the hypotheses of Corollary~\ref{lem:crowns1}.) Observe that $p \nmid c$. Define $A'' := A' \cup \{ c, c\}$. Note that $\sum_{x \in A''} x = 2p$.

\begin{claim}
Every partition of $A''$ with equal sum must also have equal size. In particular, the number of equal-size, equal-sum partitions of $A''$ is twice the number of equal-size, equal-sum partitions of $A'$. 
\end{claim}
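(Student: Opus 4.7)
The plan is to mirror the argument used for \Claim \ref{Claim2}, exploiting the fact that $c$ was chosen so large (in particular $c>b$) that no single part of an equal-sum partition of $A''$ can absorb both copies of $c$. Once we show that the two copies of $c$ must be distributed one to each part, ``equal size'' will follow from \Claim \ref{Claim2} applied to the leftover partition of $A'$, and the counting statement will drop out because the two copies of $c$ are distinguishable and can be assigned to the two parts in exactly two ways.

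First I would establish the ``equal sum implies equal size'' half. Let $\{B'',C''\}$ be an equal-sum partition of $A''$, so each part sums to $p = c + b/2$. Since $c > b$, we have $2c > c + b > c + b/2 = p$, so if a single part of $\{B'',C''\}$ contained both copies of $c$, its sum would already exceed $p$, a contradiction. Thus each part of $\{B'',C''\}$ contains exactly one copy of $c$. Deleting that copy from each side yields a partition $\{B',C'\}$ of $A'$ in which each part sums to $p-c = b/2$. By \Claim \ref{Claim2}, $|B'| = |C'| = n$, so $|B''| = |C''| = n+1$, as desired.

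Next I would establish the ``twice'' statement via a $2$-to-$1$ correspondence. Given any equal-size, equal-sum partition $\{B',C'\}$ of $A'$, I can produce an equal-size, equal-sum partition of $A''$ by adjoining one copy of $c$ to $B'$ and the other to $C'$; exactly as in the proof that $m_0 = n_0/2$ earlier in this section, the two copies of $c$ are treated as distinguishable tokens, so there are precisely two partitions of $A''$ arising in this way from a fixed $\{B',C'\}$. Conversely, by the previous paragraph, every equal-size, equal-sum partition of $A''$ arises this way from a unique $\{B',C'\}$. This gives the desired doubling. The only thing to be careful about is the bookkeeping convention that copies of equal elements count as distinguishable, which is already in force throughout this section; no estimate beyond $c>b$ is needed.
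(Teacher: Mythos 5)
Your proposal is correct and follows essentially the same route as the paper: force the two copies of $c$ into opposite parts (the paper notes the non-$c$ elements sum to $b<c$, while you observe $2c>p$ directly — an equivalent use of the hypothesis $c>b$), reduce to an equal-sum partition of $A'$, invoke Claim~\ref{Claim2} for equal size, and count the doubling via the two distinguishable copies of $c$. No substantive difference.
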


\begin{proof}
 Since the sum of all non-$c$ elements of $A''$ is less than $c$, the two copies of $c$ must be in opposite parts of any equal-sum partition of $A''$. The non-$c$ elements of such a partition form an equal-sum partition of $A'$, which must be equal-size by Claim~\ref{Claim2}. Thus, for every equal-size, equal-sum partition $\{B'', C''\}$ of $A''$, there exists an equal-size, equal-sum partition $\{B', C'\}$ of $A'$ such that $B'' = B' \cup \{c\}, C'' = C' \cup \{c\}$. Furthermore, as the elements of a multiset are distinguishable, every equal-size, equal-sum partition of $A'$ yields in fact two equal-size, equal-sum partitions of $A''$ (by exchanging the copies of $c$ between the two parts). The result now follows.
\end{proof}

We now construct a set of crowns $\mathcal{C}$ from $A''$ as follows. Let $\mathcal{C} = \{ C_{1}, \ldots, C_{2n+2}\}$ be a set of $2n+2$ crowns, one with size $m-1$ for each $m \in A''$. Observe that the (multipartite) partitions of $A''$ are in bijection with the (multipartite) partitions of $\mathcal{C}$ (where we stress that each crown in its entirety belongs to a single part; that is, we are partitioning the set $\mathcal{C}$ of crowns and not the underlying genes or telomeres). 

Denote $c_{i}$ to be the size of $C_{i}$. Let $N$ be the number of sorting scenarios for $\mathcal{C}$. Let
\[
N' = N - 2^{2p+2n-1}\prod_{i=1}^{2n+2}c_i\prod_{\ell=0}^{2n}(2p-2n-2+\ell).
\]

\noindent By Corollary~\ref{lem:crowns1}, $N'$ equals the number of most parsimonious scenarios for the $2n+2$ crowns in $\mathcal{C}$ where the crowns do not all sort together. Let $w\equiv N'\pmod p$ with $0\leq w <p$. We will show later that $p \nmid N'$ (see Claim~\ref{Claim3.11}). Now let
\[
M = \binom{2p}{p}2^{2p+2n-4}\prod_{i=1}^{2n+2}c_i\prod_{\ell=0}^{n-1}(p-n-1+\ell)^2.
\]

\noindent By Corollary~\ref{lem:crowns2}, $M$ equals the number of most parsimonious scenarios for any partition of the crowns into two parts of equal size, where the sum of the sizes of the crowns in each part is the same and the crowns within the same part sort together. Let $u\equiv M\pmod p$ with $0\leq u <p$.  We will now show that $u > 0$.

\begin{claim}
$M$ is not divisible by $p$. Consequently, $u > 0$.    
\end{claim}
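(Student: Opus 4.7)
The plan is to factor $M$ and show that each of its four factors is coprime to $p$; since $u \equiv M \pmod p$ with $0 \le u < p$, this will immediately yield $u > 0$. Recall that
\[
M = \binom{2p}{p}\, 2^{2p+2n-4}\, \prod_{i=1}^{2n+2} c_i \, \prod_{\ell=0}^{n-1} (p-n-1+\ell)^2.
\]

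For the two easy factors: by Lucas' theorem applied to the base-$p$ expansions $2p = 2\cdot p + 0$ and $p = 1\cdot p + 0$, we get $\binom{2p}{p}\equiv\binom{2}{1}\binom{0}{0}=2\pmod p$, so $p\nmid\binom{2p}{p}$. (Equivalently, the $p$-adic valuations of $(2p)!$ and $(p!)^2$ both equal $2$ and cancel.) The factor $2^{2p+2n-4}$ is coprime to $p$ because $p$ is odd.

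For the product $\prod_{\ell=0}^{n-1}(p-n-1+\ell)^2$, the base values are the $n$ consecutive integers $p-n-1,\,p-n,\,\ldots,\,p-2$. Since $p>\binom{2n+2}{n+1}\ge 2n+2$, each of these integers lies in $\{1,2,\ldots,p-2\}$ and is therefore coprime to $p$.

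The main step is $\prod_{i=1}^{2n+2} c_i$. Here $c_i = m_i - 1$ as $m_i$ ranges over the multiset $A'' = A' \cup \{c,c\}$, where $\sum_{m\in A''} m = 2p$. The key observation is that the construction forces every $m_i$ to be strictly less than $p$: the inequality $c > b$ gives $p = c + b/2 > 3b/2 > b$, so every element of $A'$, being a single summand of $b$ with the remaining summands positive, satisfies $m_i \le b < p$; and the two copies of $c$ satisfy $c = p - b/2 < p$ since $b > 0$. Combined with $m_i \ge 2$ (each element of $A'$ exceeds $a\ge 2$, and $c > b \ge 2$), we have $c_i\in\{1,\ldots,p-1\}$ for every $i$, and hence each $c_i$ is coprime to $p$. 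Assembling the four factor analyses, $p\nmid M$, as claimed. The main technical point to set up carefully is the fourth factor, since it is the only place where the specific choice $p = c + b/2$ with $c > b$ is used.
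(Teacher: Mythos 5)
Your proof is correct and follows essentially the same route as the paper: factor $M$ into the binomial coefficient, the power of $2$, the consecutive-integer product, and $\prod_i c_i$, and verify each factor is coprime to $p$ (the paper likewise uses Lucas for $\binom{2p}{p}$, oddness of $p$, the bound $p>\binom{2n+2}{n+1}$, and the bounds $c<p$ and $c_i\le b<p$ for the crown sizes). Your slightly more explicit verification that every element of $A''$ lies strictly between $1$ and $p$ is a fine way to package the last factor, but it is not a different argument.
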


\begin{proof}
By the Lucas congruence for binomial coefficients, we have that $p \nmid \binom{2p}{p}$. As $p$ is odd, $p \nmid \, 2^{2p+2n-4}$. Similarly, as $p > \binom{2n+2}{n+1}$, we have that for any $0 \leq \ell \leq n-1$ that $p \nmid (p-n-1+\ell)$. It remains to show that for each $i \in [2n+2]$, $p \nmid c_{i}$. As $b + 2c = 2p$ and $c < p$, we have necessarily that $p \nmid c$. Recall that there were two crowns of size $c-1$, say $C_{2n+1}$ and $C_{2n+2}$. So we have that $p \nmid c_{2n+1}$ and $p \nmid c_{2n+2}$. By the choice of $c$, namely that $c_{i} \leq b < p$, we have that $p \nmid c_{i}$. As $p$ is prime and does not divide any factor of $M$, it follows that $p$ does not divide $M$.
\end{proof}

In order to complete the reduction and the proof of Theorem~\ref{thm:SCJCount}, we will recover the number of equal-size, equal-sum partitions of $A''$ in polynomial-time, relative to the \algprobm{Pairwise Rearrangement} oracle.

\begin{claim} \label{Claim3.11}
The number of equal-size, equal-sum partitions of $A''$ is the least positive integer $m$ satisfying $m\equiv u^{-1}w \pmod{p}$.
\end{claim}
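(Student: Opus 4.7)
The plan is to compute $N' \bmod p$ by identifying which partitions $\pi \in \Pi(\mathcal{C})$ contribute a nonzero residue, and then invert modulo $p$ to extract $m$. The core observation is that Lemma~\ref{lem:partdivis} forces only equal-size, equal-sum two-part partitions of $A''$ to survive modulo $p$.

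First I would write
\[
N' \equiv \sum_{\substack{\pi \in \Pi(\mathcal{C}) \\ |\pi| \geq 2}} \#\text{MPS}(\mathcal{C}, \pi) \pmod{p},
\]
since by construction $N'$ excludes the single-part ``all sort together'' case. Applying Lemma~\ref{lem:partdivis} with $k = 2$ (valid because $\sum_{x \in A''} x = 2p$), every surviving summand requires each part of the corresponding partition $\pi'$ of $A''$ to have sum divisible by $p$. Because the elements of $A''$ are positive integers totaling $2p$, the only decomposition of $2p$ into positive multiples of $p$ beyond the excluded one-term case is $p + p$, so only two-part partitions with both parts of sum $p$ survive.

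Next I would appeal to the immediately preceding claim, which established that every equal-sum partition of $A''$ is automatically equal-size. Thus each surviving partition has parts of size $n+1$, and Corollary~\ref{lem:crowns2} gives its contribution as exactly $M$. Hence $N' \equiv m M \pmod{p}$, i.e.\ $w \equiv m u \pmod{p}$. Since $u \not\equiv 0 \pmod{p}$ by the prior claim, I can multiply through by $u^{-1}$ to obtain $m \equiv u^{-1} w \pmod{p}$, and the bound $m \leq \tfrac{1}{2}\binom{2n+2}{n+1} < p$ pins $m$ down uniquely as the least nonnegative (equivalently, least positive when $m \geq 1$) representative of this congruence class.

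The step requiring the most care is the first: ruling out nonzero contributions from partitions into three or more parts, and from two-part partitions whose parts have unequal sizes. The arithmetic observation that positive multiples of $p$ summing to $2p$ admit at most two nonzero terms handles the former, and the preceding equal-sum-implies-equal-size claim handles the latter.
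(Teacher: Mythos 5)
Your proposal is correct and follows essentially the same route as the paper: use Lemma~\ref{lem:partdivis} to show that, modulo $p$, the only surviving terms in the sum from Theorem~\ref{thm:allcrowns} are the trivial one-part partition (excluded from $N'$) and the two-part partitions with both sums equal to $p$ (which are equal-size by the preceding claim, each contributing $M$ by Corollary~\ref{lem:crowns2}), giving $N'\equiv vM\pmod{p}$ and then inverting $u$. The only cosmetic differences are that you bound the count by $\tfrac{1}{2}\binom{2n+2}{n+1}$ rather than $\binom{2n+2}{n+1}$ (both are below $p$) and that you are slightly more careful than the claim's statement about the degenerate case $v=0$, where ``least positive'' should really read ``least nonnegative.''
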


\begin{proof} 
As the total number of elements in $A''$ is less than $p$, any part of any partition of $A''$ has fewer than $p$ elements. So by Lemma~\ref{lem:partdivis}, the only terms in the formula from Theorem~\ref{thm:allcrowns} for $\mathcal{C}$ that are nonzero modulo $p$ are the terms corresponding to (multipartite) partitions of $A''$ where $p$ divides the sum of the elements in each part. The only possibilities for this are the trivial partition of $A''$ with one part, which has sum $2p$, and the partitions of $A''$ into two parts with equal-size and equal-sum. Recall that $N-N'$ is the number of sorting scenarios for $\mathcal{C}$ corresponding to the trivial partition of $A''$ and $M$ is the number of sorting scenarios of $\mathcal{C}$ corresponding to each equal-size, equal-size partition of $A''$ into two parts.

Since $N'$ is the number of scenarios when the crowns do not all sort together, when we calculate $N'$ modulo $p$, the only nonzero terms correspond to the equal-size equal-sum partitions of $A''$ into two parts and each of these has $M$ sorting scenarios. Therefore \mbox{$N'\equiv vM\pmod{p}$}, where $v$ is the number of equal-size equal-sum partitions of $A''$ into two parts. Since $w\equiv N'\pmod p$ and $u\equiv M \pmod p$, then $w=vu\pmod p$. Recall that $0<u<p$, so $v=u^{-1}w\pmod{p}$. Further, since $|A''|=2n+2$, $v$ is at most $\binom{2n+2}{n+1} < p$, the number of equal-size partitions of $A''$, the relation $v\equiv u^{-1}w\pmod{p}$ uniquely determines $v$, as required.   
\end{proof}

\section{Counting Single Cut or Join Medians}

In this section we turn our attention to the Single Cut or Join model and improve upon a result of Mikl\'os \& Smith \cite{MiklosSmithSamplingCounting} to establish a slightly stronger complexity result for enumerating medians. The following theorem is restated from Section~\ref{sec:Introduction}:
\begin{thm:MainMedian}
In the Single Cut or Join model, the $\#\algprobm{Median}$ problem belongs to $\textsf{FL}$.
\end{thm:MainMedian}

\begin{proof}
Let $\mathcal{G} := \{ G_{1}, G_{2}, \ldots, G_{k} \}$ be a set of genomes with the same set of edge-labels.  Mikl\'os \& Smith \cite[Theorem~1]{MiklosSmithSamplingCounting} previously showed this problem to be in $\textsf{FP}$. A careful analysis of their work shows this problem to be in $\textsf{FL}$. We first note that the adjacencies and edge labels uniquely determine a genome. Thus, we may assume a genome is represented as an adjacency matrix, which requires $O(n^{2} \log n)$ bits to specify.

Feijão \& Meidanis \cite{FeijaoMeidanis} showed that in the case when $k = 3$, there is a unique median and this median is obtained by selecting precisely the adjacencies that occur in a majority of the genomes. Their proof trivially extends to all odd $k$. Therefore, for the remainder of the proof we will assume $k$ is even. 

When $k$ is even, let $H$ be the set of adjacencies which occur in exactly half of the genomes in $\mathcal{G}$. However some of the adjacencies in $H$ share an extremity. So any optimal median for $\mathcal{G}$ contains the all of the adjacencies that are present in more than half of the genomes in $\mathcal{G}$ and any conflict-free subset of adjacencies in $H$. In order to enumerate the number of optimal medians, Mikl\'os \& Smith \cite[Theorem~1]{MiklosSmithSamplingCounting} introduced the \emph{conflict graph} $C$ which is defined as follows. The vertices of $C$ is the set of extremities present in each genome in $\mathcal{G}$. Then, there is an undirected edge $\{v_{1}, v_{2} \}$ precisely when the adjacency $v_{1}v_{2}$ is in $H$.  Mikl\'os \& Smith \cite[Observation~1]{MiklosSmithSamplingCounting} observed that each vertex in $C$ has degree at most $2$, and so the components of $C$ are paths and cycles. Any conflict-free subset of the adjacencies is a matching (not necessarily maximal and possibly empty) of $C$. So it suffices to count the matchings in each component of $C$, and then multiply these counts. With this result in mind, let's look more clarify at the complexity.

To record the conflict graph, the vertices can be written in $\textsf{AC}^{0}$ by copying the label for each extremity. Now for a pair of extremities $v_{1}, v_{2}$, we can check in $\textsf{TC}^{0}$ if the adjacency $v_{1}v_{2}$ is present in $H$ and then add the edge if it is.

Now given the conflict graph, we use Reingold's $\textsf{FL}$ undirected connectivity algorithm~\cite{Reingold} to write down the connected component as follows. We iterate over the vertices, starting with an arbitrary $u$. We begin by writing down $u$ to the output tape. Now for each vertex $v \neq u$, we use Reingold's connectivity algorithm \cite{Reingold} to check whether $u$ and $v$ belong to the same connected component and output $v$ if they do. At any step, we are using $O(\log n)$ bits to store $u$ and $v$ and $O(\log n)$ bits to run Reingold's algorithm \cite{Reingold}. So writing down the connected components (with multiplicity) is computable using a logspace transducer. 
    
Now given the connected components, we use a second logspace transducer to identify duplicates. For each connected component $X$, we pick a vertex $v_{X}$ in $X$. Then we iterate over the previously considered connected components. If $v_{X}$ appears on any previous connected component on the input tape, we move on to the next component. Otherwise, we write the current component to the output tape. Thus, the computations up to this stage are all $\textsf{FL}$-computable.

We now proceed to compute the number of optimal medians. For each connected component, it is well-known \cite{LovaszPlummer} that the number of matchings in a path of length $\ell$ is
    \begin{align} \label{MatchingPath}
    \sum_{k=1}^{\lfloor \ell/2 \rfloor} \binom{\ell-k}{k},
    \end{align}
    
  \noindent  and the number of matchings in a cycle of length $\ell$ is
    \begin{align} \label{MatchingCycle}
    \sum_{k=1}^{\lfloor \ell/2 \rfloor} \frac{\ell}{\ell-k} \binom{\ell-k}{k}.
    \end{align}
    
Since there are $n$ vertices and $O(n)$ edges, our input requires space $O(n^{2} \log n)$. Each of $1, \ldots, n$ can be written down using $O(\log n)$ bits. Now iterated multiplication circuit of $m$ $m$-bit integers is $\textsf{TC}^{0}$-computable \cite{HESSE2002695}. In our setting, we have that $m = O(n^{2} \log n)$. As only $O(n \log n)$ bits are required to write down the sequence $1, \ldots, n$ and as $k \leq n$, we can compute $n!$ and $k!$ in $\textsf{TC}^{0}$. Furthermore, division is $\textsf{TC}^{0}$-computable \cite{HESSE2002695}. Thus, we may compute $\binom{n}{k}$ and $\binom{n-k}{k}$ in $\textsf{TC}^{0}$. 
So each term in the summations from (\ref{MatchingPath}) and (\ref{MatchingCycle}) requires $O(n \log n) \subseteq O(n^{2} \log n)$ bits to write down. It is well-known that the iterated addition of $m$ $m$-bit integers is $\textsf{TC}^{0}$-computable (here again, $m = O(n^{2} \log n)$) (c.f., \cite{VollmerText}), thus we may evaluate the summations in $\textsf{TC}^{0}$. 
    
Finally, we multiply the number of matchings on each connected component. We note that each component has at most $2^{n}$ matchings, and so writing down the number of matchings for each component takes $O(n)$ (which is fewer than $O(n^{2} \log n)$) bits. Again, as iterated multiplication is $\textsf{TC}^{0}$-computable \cite{HESSE2002695}, we may use an $\textsf{TC}^{0}$ circuit to compute the final output, which is the number of medians.
    
    We have a constant number of $\textsf{FL}$-computable stages, and so our algorithm is \mbox{$\textsf{FL}$-computable} in this case.
The result now follows.
\end{proof}

In light of the discussion in \Rmk{rmk:MedianComplexity}, it is natural to ask if the upper bound of $\textsf{FL}$ for \Thm{thm:MedianSCJ} can be improved. The proof relies crucially on identifying connected components in a graph, which is $\textsf{FL}$-complete, even when the graph is a forest with all vertices having degree at most $2$ \cite{CookMcKenzie}. Thus, we conjecture that our bound is optimal---namely, that $\algprobm{\#Median}$ is in fact $\textsf{FL}$-complete.

\section{Conclusion}

We investigated the computational complexity of genome rearrangement problems in the Single Cut-and-Join and Single Cut or Join models. In particular, we showed that the \algprobm{Pairwise Rearrangement} problem in the Single Cut-and-Join model is $\#\textsf{P}$-complete under polynomial-time Turing reductions (Theorem \ref{thm:MainCounting}) and in the Single Cut or Join model, $\#\algprobm{Median} \in \textsf{FL}$ (Theorem~\ref{thm:MedianSCJ}). 

Natural next steps for the Single Cut-and-Join model include investigating the complexity of  $\algprobm{Median}$-- it remains open whether this problem is $\textsf{NP}$-hard. It would also be interesting to investigate whether there is particular combinatorial structure in the Single Cut-and-Join model that forces the \algprobm{Pairwise Rearrangement} problem to be $\#\textsf{P}$-complete. It remains open whether \algprobm{Pairwise Rearrangement} remains $\textsf{\#P}$-hard when the adjacency graph does not contain any crowns. In light of our result that \algprobm{Pairwise Rearrangement} is $\#\textsf{P}$-complete, we seek efficient algorithmic approaches to cope with this intractability. One natural approach would be to investigate if we can efficiently sample sorting scenarios; that is, whether \algprobm{Pairwise Rearrangement} belongs to $\textsf{FPRAS}$.  Along with these natural next steps, we point the reader to Table~\ref{fig:Fig1} for additional open questions regarding the complexity of these combinatorial problems. 

\bibliographystyle{plainurl}
\bibliography{refs}

\newpage
\appendix
\section{Proof of Theorem~\ref{thm:allcrowns}} \label{app:AllCrowns}

\noindent In this section, we will prove Theorem~\ref{thm:allcrowns}. We will begin with some preliminary lemmas.

\begin{lemma}\label{lem:ksum}
For any integer $i\geq0$, integer $q\geq3$, integer $1\leq t\leq i$, and real number $s$,
\[
\sum_{k=i-t}^{i-1}\pb{\pb{s+\pb{i-k}q+\pb{k-2i}}\prod_{m=1}^{q-3}\pb{s+m-k}}=t\prod_{m=1}^{q-2}\pb{s+m-i+t}.
\]
\end{lemma}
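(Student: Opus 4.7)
My plan is to prove the identity by induction on $t$, taking the base case $t=0$ (where the LHS is an empty sum and the RHS has the factor $t=0$, so both sides are $0$). Since both sides depend explicitly on $t$ in different ways (the sum's range on the left, and both the leading factor and the shift $t$ in the product on the right), the inductive step will require some care to match.

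For the inductive step, write the summand as $a(i,k) := (s + (i-k)q + (k-2i))\prod_{m=1}^{q-3}(s+m-k)$ and observe that
\[
\sum_{k=i-t}^{i-1} a(i,k) \;=\; \sum_{k=i-(t-1)}^{i-1} a(i,k) \;+\; a(i,i-t).
\]
Applying the inductive hypothesis to the first term replaces it with $(t-1)\prod_{m=1}^{q-2}(s+m-i+t-1)$, while a direct substitution gives $a(i,i-t) = (s+tq-t-i)\prod_{m=1}^{q-3}(s+m-i+t)$. The goal then reduces to the algebraic identity
\[
(t-1)\prod_{m=1}^{q-2}(s+m-i+t-1) + (s+tq-t-i)\prod_{m=1}^{q-3}(s+m-i+t) \;=\; t\prod_{m=1}^{q-2}(s+m-i+t).
\]

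The key observation that makes this simplification transparent is that all three products share the common factor $Q := \prod_{m=1}^{q-3}(s+m-i+t)$. Indeed, $\prod_{m=1}^{q-2}(s+m-i+t-1) = (s+t-i)\cdot Q$ (the shifted product starts one step earlier, supplying the factor $(s+t-i)$ while dropping the top factor) and $\prod_{m=1}^{q-2}(s+m-i+t) = Q\cdot(s+q+t-2-i)$. After factoring $Q$ out of everything, the identity becomes the linear equation
\[
(t-1)(s+t-i) + (s+tq-t-i) \;=\; t\,(s+q+t-2-i),
\]
which is an elementary calculation: both sides expand to $t s + t^2 - ti + tq - 2t$. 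This closes the induction.

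The only mildly tricky step is recognizing the common factor $Q$; once that is in hand, the rest is bookkeeping. I would present the proof as: (i) state the induction; (ii) record the two product identities that isolate $Q$; (iii) perform the linear algebra check. No special hypothesis on $s$ is needed since the identity is polynomial in $s$, and the conditions $q\geq 3$ and $1\leq t\leq i$ are used only to ensure the sum is nonempty and the products (in particular $\prod_{m=1}^{q-3}$) are well-defined with the usual empty-product convention when $q=3$.
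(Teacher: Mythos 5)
Your proof is correct and follows essentially the same route as the paper's: induction on $t$, peeling off the $k=i-t$ term, and reducing to a linear identity after factoring out the common product $\prod_{m=1}^{q-3}(s+m-i+t)$. The only cosmetic difference is that you anchor the induction at the trivial $t=0$ case rather than at $t=1$ as the paper does, which is equally valid.
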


\begin{proof}
We proceed by induction on $t$. \noindent When $t=1$,
\begin{align*}
\sum_{k=i-1}^{i-1}&\pb{\pb{s+\pb{i-k}q+\pb{k-2i}}\prod_{m=1}^{q-3}\pb{s+m-k}}\\
&=\pb{s+\pb{i-\pb{i-1}}q+\pb{\pb{i-1}-2i}}\prod_{m=1}^{q-3}\pb{s+m-i+1}
=\prod_{m=1}^{q-2}\pb{s+m-i+1}, 
\end{align*}

\noindent as required. Now, suppose that
\[
\sum_{k=i-t+1}^{i-1}\pb{\pb{s+\pb{i-k}q+\pb{k-2i}}\prod_{m=1}^{q-3}\pb{s+m-k}}=\pb{t-1}\prod_{m=1}^{q-2}\pb{s+m-i+t-1}.
\]
We have,
\begin{align*}
\sum_{k=i-t}^{i-1}&\pb{\pb{s+\pb{i-k}q+\pb{k-2i}}\prod_{m=1}^{q-3}\pb{s+m-k}}\\
&=\pb{s+\pb{i-\pb{i-t}}q+\pb{\pb{i-t}-2i}}\prod_{m=1}^{q-3}\pb{s+m-i+t}\\
&\quad +\sum_{k=i-t+1}^{i-1}\pb{\pb{s+\pb{i-k}q+\pb{k-2i}}\prod_{m=1}^{q-3}\pb{s+m-k}}\\
&=\pb{s+\pb{i-\pb{i-t}}q+\pb{\pb{i-t}-2i}}\prod_{m=1}^{q-3}\pb{s+m-i+t}+\pb{t-1}\prod_{m=1}^{q-2}\pb{s+m-i+t-1}\\
&=\pb{s+tq-i-t}\prod_{m=1}^{q-3}\pb{s+m-i+t}+\pb{t-1}\prod_{m=0}^{q-3}\pb{s+m-i+t}
=t\prod_{m=1}^{q-2}\pb{s+m-i+t},
\end{align*}

\noindent as required.
\end{proof}

\begin{lemma}\label{lem:crownidentity}
For integer $q\geq3$ and nonnegative integers $c_1,c_2,\ldots,c_q$, define $s=\sum_{i=1}^qc_i$. Then,
\[
\sum_{j=1}^q\sum_{k=0}^{c_j-1}\pb{\pb{\pb{q-2}c_j-\pb{q-1}k+s}\prod_{m=1}^{q-3}\pb{s-k+m}}=\prod_{m=2}^q\pb{s+q-m}.
\]
\end{lemma}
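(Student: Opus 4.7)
The plan is to recognize that the inner sum over $k$ (for each fixed $j$) in the left-hand side is \emph{exactly} the left-hand side of Lemma~\ref{lem:ksum} with the substitution $i = t = c_j$. First I would rewrite the linear factor in the summand to match Lemma~\ref{lem:ksum}: a direct expansion shows
\[
s + (i-k)q + (k-2i) = s - (q-1)k + (q-2)i,
\]
so with $i = c_j$ this coincides with $(q-2)c_j - (q-1)k + s$. The product factor $\prod_{m=1}^{q-3}(s-k+m) = \prod_{m=1}^{q-3}(s+m-k)$ already matches without modification, and the summation range $k = 0, 1, \ldots, c_j - 1$ corresponds to $k = i-t, \ldots, i-1$ when $t = i = c_j$. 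The hypothesis $1 \leq t \leq i$ of Lemma~\ref{lem:ksum} is then just $c_j \geq 1$; for any $j$ with $c_j = 0$ the inner sum is empty, and the argument below still works because its contribution will be multiplied by $c_j = 0$.

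Next I would apply Lemma~\ref{lem:ksum} to obtain, for each $j$,
\[
\sum_{k=0}^{c_j-1}\pb{\pb{(q-2)c_j - (q-1)k + s}\prod_{m=1}^{q-3}\pb{s-k+m}} = c_j\prod_{m=1}^{q-2}\pb{s+m-c_j+c_j} = c_j\prod_{m=1}^{q-2}(s+m).
\]
Crucially, the right-hand side no longer depends on $j$ except through the scalar $c_j$, which is exactly what makes the outer sum collapse.

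Finally I would sum over $j$. Since $\sum_{j=1}^q c_j = s$ by definition, the total becomes
\[
\sum_{j=1}^q c_j \prod_{m=1}^{q-2}(s+m) = s \prod_{m=1}^{q-2}(s+m) = s(s+1)(s+2)\cdots(s+q-2).
\]
To finish, I would re-index the claimed right-hand side by setting $m' = q - m$, so that $m$ running from $2$ to $q$ makes $m'$ run from $q-2$ down to $0$, giving $\prod_{m=2}^q(s+q-m) = \prod_{m'=0}^{q-2}(s+m') = s(s+1)\cdots(s+q-2)$, matching what we derived.

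I do not expect a serious obstacle here: the whole identity reduces, via the previous lemma and the definition of $s$, to matching two explicit falling-factorial-like products. The only nuance is verifying the symbolic match of the linear factor (keeping careful track of signs when converting $(i-k)q + (k-2i)$ into $(q-2)i - (q-1)k$) and handling the harmless degenerate case $c_j = 0$. Once Lemma~\ref{lem:ksum} has been invoked, the remainder is a one-line index substitution.
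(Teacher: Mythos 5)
Your proposal is correct and follows essentially the same route as the paper: both reduce the inner sum to Lemma~\ref{lem:ksum} with the substitution $t=i=c_j$ (after rewriting the linear factor as $s+(i-k)q+(k-2i)$) and then collapse the outer sum using $\sum_j c_j = s$. The only cosmetic difference is that the paper first regroups the terms by the frequency of each value among $c_1,\ldots,c_q$ before invoking the lemma, whereas you apply it term by term; the substance is identical.
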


\begin{proof}
Note that any two values $j_1$ and $j_2$ with $c_{j_1}=c_{j_2}$ contribute the same value to the sum over $j$. With this in mind, define a sequence $\st{a_i}_{i\geq0}$, where $a_i$ equals the number of values among $c_1,c_2,\ldots,c_q$ equal to $i$. We can now regroup the given sum by these frequencies:
\begin{align*}
\sum_{j=1}^q&\sum_{k=0}^{c_j-1}\pb{\pb{\pb{q-2}c_j-\pb{q-1}k+s}\prod_{m=1}^{q-3}\pb{s-k+m}}\\
&=\sum_{i=0}^\infty a_i\sum_{k=0}^{i-1}\pb{\pb{\pb{q-2}i-\pb{q-1}k+s}\prod_{m=1}^{q-3}\pb{s-k+m}}\\
&=\sum_{i=0}^\infty a_i\sum_{k=0}^{i-1}\pb{\pb{s+\pb{i-k}q+\pb{k-2i}}\prod_{m=1}^{q-3}\pb{s-k+m}}.
\end{align*}
By Lemma~\ref{lem:ksum} (taking $t=i$), we have
\begin{align*}
\sum_{i=0}^\infty &a_i\sum_{k=0}^{i-1}\pb{\pb{s+\pb{i-k}q+\pb{k-2i}}\prod_{m=1}^{q-3}\pb{s-k+m}}\\
&=\sum_{i=0}^\infty a_i\cdot i\prod_{m=1}^{q-2}\pb{s+m} 
=\prod_{m=1}^{q-2}\pb{s+m}\sum_{i=0}^\infty a_i\cdot i.
\end{align*}
Since $a_i$ is the frequency of the value $i$ among $c_1,c_2,\ldots,c_q$, we observe that
\[
\sum_{i=0}^\infty a_i\cdot i=\sum_{i=1}^qc_i=s.
\]
So, we have
\[
\prod_{m=1}^{q-2}\pb{s+m}\sum_{i=0}^\infty a_i\cdot i=s\prod_{m=1}^{q-2}\pb{s+m}=\prod_{m=0}^{q-2}\pb{s+m}.
\]
Replacing the index $m$ by $q-m$, this equals the required product of $\prod\limits_{m=2}^q\pb{s+q-m}$.
\end{proof}

\begin{proposition}  \label{prop:allcrownjoin}
Consider an adjacency graph consisting entirely of a collection $\mathcal{C}=\{C_1, C_2, \ldots, C_q\}$ of crowns where $C_i$ has size $c_i$. If all crowns must sort together, then the number of most parsimonious scenarios is
$$\left(\prod\limits_{i=1}^q c_i\right)2^{d+q-3}\prod\limits_{k=2}^q (d-k).$$
\end{proposition}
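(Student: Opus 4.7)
The plan is to prove the formula by induction on $q$, with Lemma~\ref{lem:crownidentity} providing the key algebraic collapse. The first step is a structural description: because none of the sorting operations (a)--(h) in Observation~\ref{obs:SortingScenarios} merges two W-shaped components, any MPS in which all $q$ crowns sort together must admit at most one W at any time (if two W's ever coexisted, the crowns contributing to one could never become $\sim$-related to those contributing to the other). Consequently each such MPS decomposes uniquely as a single (e) opening a ``seed'' crown, followed by $q-1$ operations of type (h) absorbing the remaining crowns into the growing W one at a time, interleaved with $s-1$ operations of type (b), and terminated by a single (c). The total is $s+q=d$ operations. The local choice counts are $c_{\sigma(1)}$ for the (e); $4c_{\sigma(i)}$ for each (h), coming from $c_{\sigma(i)}$ choices of adjacency to cut in the new crown together with $4$ choices for the join of one of the two new telomeres with one of the W's two telomeres; $2$ for each (b); and $1$ for the (c).

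The base case $q=1$ is part~(d) of the earlier single-component $\#\text{ST}$ lemma, and $q=2$ is verified directly using that reducing a W of size $w$ alone contributes $2^{w-1}$, from part~(b) of the same lemma. For the inductive step with $q\ge 3$, I decompose the total count by recording the first three moves: the crown $C_j$ cut by (e), the number $k$ of subsequent (b) operations before the first (h), and the crown $C_{j'}$ absorbed by that first (h). After these moves the configuration is a single W of size $c_j+c_{j'}-k$ together with the $q-2$ untouched crowns. An auxiliary closed form for the count from such a ``one W plus crowns'' state, established by a secondary induction on the number of remaining crowns, evaluates the tail to
\[
\Bigl(\prod_{i\ne j,j'}c_i\Bigr)(c_j+c_{j'}-k)\prod_{k'=1}^{q-3}(s-k+k')\cdot 2^{s-k+2q-5}.
\]

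Combining the local choices $c_j\cdot 2^k\cdot 4c_{j'}$ with this tail, the powers of $2$ collapse to $2^{s+2q-3}=2^{d+q-3}$ independently of $(j,k,j')$, the global factor $\prod_i c_i$ pulls out, and the inner sum over $j'\neq j$ simplifies to $\sum_{j'\ne j}(c_j+c_{j'}-k)=(q-2)c_j-(q-1)k+s$. The total then takes the form
\[
\Bigl(\prod_{i=1}^q c_i\Bigr)2^{d+q-3}\sum_{j=1}^{q}\sum_{k=0}^{c_j-1}\bigl((q-2)c_j-(q-1)k+s\bigr)\prod_{m=1}^{q-3}(s-k+m),
\]
and Lemma~\ref{lem:crownidentity} collapses the double sum to $\prod_{m=2}^q(s+q-m)=\prod_{k=2}^q(d-k)$, yielding the claimed formula.

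The main obstacle is establishing the auxiliary closed form for the ``one W plus crowns'' state, since a naive induction on $q$ alone exits the strict ``only crowns'' regime of the proposition. Getting the exponent of $2$ and the telescoping products to line up exactly with the hypothesis of Lemma~\ref{lem:crownidentity} requires careful bookkeeping through the (b)-versus-(h) recursion, especially at the boundary cases $w=1$ (where no (b) is available) and $m=0$ (where the bare $2^{w-1}$ formula must be applied rather than the general closed form). The algebraic heart of this secondary induction is the collapse $(w-1)u+(m-1)w+u=w(u+m-1)$ with $u=w+s_m$, which allows the two recursion terms to telescope into the target product.
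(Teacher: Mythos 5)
Your proposal is correct and follows essentially the same route as the paper's proof: both decompose each scenario by the linear order in which crowns are first cut or absorbed into the unique growing $W$-shaped component, record the number of intermediate reduction steps between absorptions, multiply the local choice counts (which collapse to the common factor $\left(\prod_i c_i\right)2^{d+q-3}$), and reduce everything to the double sum $\sum_{j}\sum_{k}\bigl((q-2)c_j-(q-1)k+s\bigr)\prod_{m=1}^{q-3}(s-k+m)$, which Lemma~\ref{lem:crownidentity} collapses to $\prod_{k=2}^{q}(d-k)$. The only difference is organizational: the paper evaluates the nested sums over $k_1,\ldots,k_q$ from the inside out, whereas you package the same computation as an induction on $q$ via an auxiliary closed form for the ``one $W$ plus remaining crowns'' state, which checks out against the paper's intermediate expressions.
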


\begin{proof}
Consider a sorting scenario in which all crowns sort together. 
Initially, a cut operation replaces one of the crowns with a $W$-shaped component. Because all the crowns sort together, the case analysis in Observation~\ref{obs:SortingScenarios} reveals that, for each other crown, there is a unique cut-join operation that cuts an adjacency in the crown and joins it with a telomere from a $W$-shaped component. Thus each scenario induces a linear order on the crowns based on the first cut or cut-join operation that operated on  an adjacency in each crown.  

    First consider the case when the crowns get cut in order $C_1, C_2, \ldots, C_q$. After cutting $C_1$, we are left with a $W$-shaped component of size $c_1$. Applying $k_1$ ($0\leq k_1\leq c_1-1$)  sorting steps to this $W$-shaped component results in a $W$-shaped component of size $c_1-k_1$. 
    Now we cut-join $C_2$ into this $W$-shaped component of size $c_1-k_1$ which yields $W$-shaped component of size $c_1+c_2-k_1$. Applying $k_2$ sorting steps to this new component yields a $W$-shaped component of size $c_1+c_2-k_1-k_2$ $(0\le k_2\le c_1+c_2-k_1-1)$. Then cut-join $C_3$ into this $W$-shaped component, continuing in this manner until everything is sorted.

    This gives us the following number of most parsimonious scenarios:
    \begin{align*}
        &\left(\prod\limits_{i=1}^q c_i\right)2^{\left(\sum\limits_{i=1}^q c_i\right)-1+2(q-1)} 
        \sum\limits_{k_1=0}^{(c_1-1)}\;\sum\limits_{k_2=0}^{(c_1+c_2-k_1-1)}\ldots
        \sum\limits_{k_{q-1}=0}^{\left[\left(\sum\limits_{i=1}^{q-1} c_i\right)-\left(\sum\limits_{j=1}^{q-2}k_j\right)-1\right]}
        \sum\limits_{k_q=0}^{\left[\left(\sum\limits_{i=1}^q c_i\right)-\left(\sum\limits_{j=1}^{q-1}k_j\right)\right]} 1\\
        =\; &\left(\prod\limits_{i=1}^q c_i\right)2^{(d-q)+2q-3} 
        \sum\limits_{k_1=0}^{(c_1-1)}\sum\limits_{k_2=0}^{(c_1+c_2-k_1-1)}\ldots
        \sum\limits_{k_{q-1}=0}^{\left[\left(\sum\limits_{i=1}^{q-1} c_i\right)-\left(\sum\limits_{j=1}^{q-2}k_j\right)-1\right]}
        \left[\left(\sum\limits_{i=1}^q c_i\right)-\left(\sum\limits_{j=1}^{q-1}k_j\right)\right]\\
        =\; &\left(\prod\limits_{i=1}^q c_i\right)2^{d+q-3}
        \sum\limits_{k_1=0}^{(c_1-1)}\sum\limits_{k_2=0}^{(c_1+c_2-k_1-1)}\ldots\\
        &  
        \sum\limits_{k_{q-1}=0}^{\left[\left(\sum\limits_{i=1}^{q-2} c_i\right)-\left(\sum\limits_{j=1}^{q-3}k_j\right)-1\right]}
        \left[\frac{1}{2}\left(\sum\limits_{i=1}^{q-2} c_i - \sum\limits_{j=1}^{q-3} k_j\right)\left(c_{q-1}+1+\sum\limits_{i=1}^{q-1} c_i-\sum\limits_{j=1}^{q-2}k_j\right)\right]\\
        =\; & \qquad \vdots\\
        =\; & \left(\prod\limits_{i=1}^q c_i\right)2^{d+q-3}
        \sum\limits_{k_1=0}^{(c_1-1)}
        \left\lbrace
        \left((q-2)c_1-(q-1)k_1+\sum\limits_{i=1}^{q} c_i \right)
        \prod\limits_{m=1}^{q-3}\left[\left(\sum\limits_{i=1}^{q} c_i\right)-k_1+m\right]
        \right\rbrace\\
    \end{align*} 
   
    \noindent Repeating this process for all cases considering cutting can start with any of the crowns, we have that the number of most parsimonious scenarios is
    $$\left(\prod\limits_{i=1}^q c_i\right)2^{d+q-3}
        \sum\limits_{j=1}^q
        \sum\limits_{k=0}^{(c_j-1)}
        \left\lbrace
        \left((q-2)c_j-(q-1)k+\sum\limits_{i=1}^{q} c_i \right)
        \prod\limits_{m=1}^{q-3}\left[\left(\sum\limits_{i=1}^{q} c_i\right)-k+m\right]
        \right\rbrace, $$
    which by Lemma \ref{lem:crownidentity} is equal to 
        $$\left(\prod\limits_{i=1}^q c_i\right)2^{d+q-3}\prod\limits_{k=2}^q (d-k).$$
\end{proof}

We now work toward a formula for counting all most parsimonious scenarios with all crowns. The formula in Theorem \ref{thm:allcrowns} above comes from Proposition \ref{prop:allcrownjoin} above and Lemma \ref{lem:partcount} below.

\begin{lemma}\label{lem:partcount}
Let $\mathcal{B}$ be the set of components of an adjacency graph. We have
\[
\#\text{MPS}(\mathcal{B})=\sum_{\pi\in\Pi(\mathcal{B})}\binom{d(\mathcal{B})}{d(\pi_1),d(\pi_2),\ldots,d(\pi_{k})}\prod_{i=1}^{k}\#\text{ST}(\pi_i),
\]

\noindent where, for each $\pi\in \Pi(\mathcal{B})$, $k=k(\pi)$ is the number of parts in $\pi$.
\end{lemma}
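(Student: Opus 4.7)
The plan is to partition the set of most parsimonious scenarios according to the sort-together partition each scenario induces (as defined in Definition~\ref{def:equivrelation}), and then establish a bijective decomposition of the scenarios compatible with a fixed partition. Writing $\#\text{MPS}(\mathcal{B},\pi)$ for the number of most parsimonious scenarios whose induced sort-together partition equals $\pi$, we have
\[
\#\text{MPS}(\mathcal{B}) = \sum_{\pi \in \Pi(\mathcal{B})} \#\text{MPS}(\mathcal{B},\pi),
\]
so it suffices to prove that for each $\pi = (\pi_1, \ldots, \pi_k)$,
\[
\#\text{MPS}(\mathcal{B},\pi) = \binom{d(\mathcal{B})}{d(\pi_1), \ldots, d(\pi_k)} \prod_{i=1}^{k} \#\text{ST}(\pi_i).
\]

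First I would verify distance additivity: $d(\mathcal{B}) = \sum_{i=1}^{k} d(\pi_i)$ for every partition $\pi \in \Pi(\mathcal{B})$. This is immediate from equation~(\ref{eq:SubsetDistance}), since every component's size, every trivial crown, and every non-trivial crown belongs to exactly one part of $\pi$, so the three contributions sum termwise.

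Next I would set up the decomposition. Given a most parsimonious scenario $\sigma$ with induced partition $\pi$, for each $i$ let $\tau_i$ be the subsequence of $\sigma$ consisting of those operations whose input components lie in $\pi_i$. By the definition of the sort-together relation, no operation in $\sigma$ involves components from two different parts, so the $\tau_i$ partition the operations of $\sigma$. Each $\tau_i$ is a sequence acting only on the components of $\pi_i$ and transforming them into trivial components; hence $|\tau_i| \geq d(\pi_i)$. Since $\sum_i |\tau_i| = d(\mathcal{B}) = \sum_i d(\pi_i)$ by additivity, equality holds in each part. The cut-join operations witnessing that $\pi_i$ is a single sort-together class all lie in $\tau_i$, so $\tau_i$ witnesses $\pi_i$ sorting together with $d(\pi_i)$ operations, i.e.\ $\tau_i$ is one of the $\#\text{ST}(\pi_i)$ sequences counted. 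Together with the interleaving pattern --- which records which of the $d(\mathcal{B})$ positions of $\sigma$ correspond to $\tau_i$, and is encoded by the multinomial coefficient --- the tuple $(\tau_1, \ldots, \tau_k)$ determines $\sigma$.

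For the converse, given sequences $\tau_i$ counted by $\#\text{ST}(\pi_i)$ and an ordered partition of $\{1, \ldots, d(\mathcal{B})\}$ into blocks of sizes $d(\pi_1), \ldots, d(\pi_k)$, interleaving the $\tau_i$ accordingly produces a sequence $\sigma$ of $d(\mathcal{B})$ operations. Because $\tau_i$ acts only on components of $\pi_i$ and distinct parts involve disjoint components, the operations originating from different parts commute, so $\sigma$ is a valid sorting scenario. The main obstacle is verifying that the induced sort-together partition of $\sigma$ is \emph{exactly} $\pi$: it cannot be coarser than $\pi$, since no cut-join in $\sigma$ joins extremities from two different parts, and it cannot be strictly finer, since each $\tau_i$ already realizes $\pi_i$ as a single sort-together class and that witness survives interleaving. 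Once this verification is in place, the decomposition is a bijection, and multiplying the counts gives the desired formula.
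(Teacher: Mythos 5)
Your proposal is correct and follows essentially the same route as the paper's proof: both count scenarios by summing over the induced sort-together partition, choosing one of the $\#\text{ST}(\pi_i)$ sequences for each part independently, and multiplying by the multinomial coefficient for the interleaving. Your write-up is somewhat more careful than the paper's (you explicitly verify distance additivity, that each restricted subsequence has length exactly $d(\pi_i)$, and that the interleaved scenario induces exactly $\pi$ so no scenario is counted under two different partitions), but the underlying decomposition is identical.
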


\begin{proof}
Since ``sort together'' is an equivalence relation (see Definition~\ref{def:equivrelation}), any most parsimonious scenario on $\mathcal{B}$ partitions the elements of $\mathcal{B}$ into sets of components that sort together. Given such a partition $\pi$, we must choose a sorting scenario for each part. The number of choices for part $\pi_i$ is $\#\text{ST}(\pi_{i})$, and the choices are independent across parts. 
 So, there are
\[
\prod_{i=1}^{k} \#\text{ST}(\pi_i)
\]
total ways to select most parsimonious scenarios for all the parts of $\pi$. Finally, we must select the order in which to do the operations in a combined sorting scenario. The total number of operations is $d(\mathcal{B})$. Of these operations, $d(\pi_i)$ must take place in part $i$ for each $i$ from $1$ to $k$. The total number of strings of length $d(\mathcal{B})$ on an alphabet with $k$ symbols and $d(\pi_i)$ copies of the $i^{th}$ symbol is equal to $\binom{d(\mathcal{B})}{d(\pi_1),d(\pi_2),\ldots,d(\pi_k)}$, and each such string corresponds to a distinct scenario. Since the choice of ordering is also independent of the scenario for each part, we multiply by the multinomial coefficient. Summing over all possible partitions yields the desired formula.
\end{proof}

Finally, we prove Theorem \ref{thm:allcrowns}.

\begin{proof}[Proof of Theorem~\ref{thm:allcrowns}]
Recall that the distance of a set of crowns equals the number of crowns plus the sizes of the crowns. By Lemma~\ref{lem:partcount},
    \[
    \#\text{MPS}(\mathcal{C})=\sum_{\pi\in\Pi(\mathcal{C})}\binom{d(\mathcal{C})}{d(\pi_1),d(\pi_2),\ldots,d(\pi_{k})}\prod_{i=1}^{k}\#\text{ST}(\pi_i).
    \]
    
    \noindent By Proposition~\ref{prop:allcrownjoin},
    \[
    \#\text{ST}(\pi_i)=\left(\prod\limits_{j=1}^{p_i} c_{{\pi_i}_j}\right)2^{d(\pi_i)+p_i-3}\prod\limits_{\ell=2}^{p_i} (d(\pi_i)-\ell),
    \]
    so
    \[
    \#\text{MPS}(\mathcal{C})=\sum_{\pi\in\Pi(\mathcal{C})}\binom{d}{d(\pi_1),d(\pi_2),\ldots,d(\pi_{k})}\prod_{j=1}^{k}\left(\prod\limits_{i=1}^{p_i} c_{{\pi_j}_i}\right)2^{d(\pi_j)+p_j-3}\prod\limits_{\ell=2}^{p_j} (d(\pi_j)-\ell).
    \]

    \noindent There are three expressions inside the product over $j$; we examine each in turn. First, we have a product over all crown lengths in $\pi_j$. When multiplied over all parts of $\pi$, we end up with a product over all crown lengths. Second, we have a power of $2$. The exponent on $2$ when everything is multiplied out will be
    \[
    \sum_{j=1}^kd(\pi_j)+\sum_{j=1}^k p_j-3k.
    \]
    This first sum equals $d$, and the second sum equals $q$, so we have
    \[
    \#\text{MPS}(\mathcal{C})=\sum_{\pi\in\Pi(\mathcal{C})}\binom{d}{d(\pi_1),d(\pi_2),\ldots,d(\pi_{k})}2^{d+q-3k}\prod_{i=1}^q c_i\prod_{j=1}^{k}\prod\limits_{\ell=2}^{p_j} (d(\pi_j)-\ell).
    \]
    Now, observe the following: 
    \begin{itemize}
        \item $d(\pi_i)=g_i+p_i$
        \item By replacing $\ell$ with $p_j-\ell$, we have
        \[
        \prod\limits_{\ell=2}^{p_j} (d(\pi_j)-\ell)=\prod\limits_{\ell=0}^{p_j-2} (d(\pi_j)-p_j+\ell) = \prod\limits_{\ell=0}^{p_j-2} (g_j+\ell).
        \]
    \end{itemize}
    So, we have
    \begin{align*}
        \#\text{MPS}(\mathcal{C}) &=\sum_{\pi\in\Pi(\mathcal{C})}\binom{d}{g_1+p_1,g_2+p_2,\ldots,g_k+p_k}2^{d+q-3k}\prod_{i=1}^q c_i\left(\prod_{j=1}^{k}\left(\prod\limits_{\ell=0}^{p_j-2} (g_j+\ell)\right)\right),
    \end{align*}
    as required.
\end{proof}

\end{document}